\newif\ifnotes
\newcommand{\authnote}[3]{\textcolor{#3}{[{\footnotesize {\bf #1:} { {#2}}}]}}
\newcommand{\jnote}[1]{\ifnotes \authnote{J}{#1}{purple} \fi}
\newcommand{\unote}[1]{\ifnotes \authnote{U}{#1}{blue} \fi}
\newcommand{\textdef}[1]{\textnormal{\textsf{#1}}}
\newtheorem{theorem}{Theorem}[section]
\newtheorem{proposition}[theorem]{Proposition}
\newtheorem{claim}[theorem]{Claim}
\newtheorem{lemma}[theorem]{Lemma}
\newtheorem{fact}[theorem]{Fact}
\Crefname{importedtheorem}{Imported Theorem}{Imported Theorems}
\Crefname{theorem}{Theorem}{Theorems}
\Crefname{proposition}{Proposition}{Propositions}
\Crefname{claim}{Claim}{Claims}
\Crefname{lemma}{Lemma}{Lemmas}
\Crefname{conjecture}{Conjecture}{Conjectures}
\Crefname{corollary}{Corollary}{Corollaries}
\Crefname{construction}{Construction}{Constructions}
\Crefname{property}{Property}{Properties}
\theoremstyle{definition}
\newtheorem{definition}{Definition}
\Crefname{definition}{Definition}{Definitions}
\Crefname{assumption}{Assumption}{Assumptions}
\Crefname{notation}{Notation}{Notations}
\theoremstyle{remark}
\Crefname{question}{Question}{Questions}
\Crefname{remark}{Remark}{Remarks}
\Crefname{fact}{Fact}{Facts}
\def\cD{{\cal D}}
\def\cF{{\cal F}}
\def\cG{{\cal G}}
\def\cP{{\cal P}}
\def\cQ{{\cal Q}}
\def\cU{{\cal U}}
\def\cV{{\cal V}}
\def\cX{{\cal X}}
\def\cY{{\cal Y}}
\def\bbF{{\mathbb F}}
\def\bbN{{\mathbb N}}
\def\bbR{{\mathbb R}}
\newcommand{\F}{\bbF}
\DeclareMathOperator*{\expectation}{\mathbb{E}}
\newcommand{\E}{\expectation}
\newcommand{\dtv}{d_{\mathsf{TV}}}
\newcommand{\pmone}{\{-1, 1\}}
\newcommand{\eqdef}{\stackrel{\mathsf{def}}{=}}
\newcommand{\wt}{\mathsf{wt}}
\newcommand{\hwt}{\mathsf{hwt}}
\newcommand{\GHZ}{\mathsf{GHZ}}
\newcommand{\val}{\mathsf{val}}
\newcommand{\supp}{\mathsf{supp}}
\newcommand{\eps}{\varepsilon}
\newcommand{\abs}[1]{\left| #1 \right|}
\newcommand{\pbra}[1]{\left( #1 \right)}
\newcommand{\sbra}[1]{\left[ #1 \right]}
\DeclarePairedDelimiterX{\infdivx}[2]{(}{)}{%
	#1\;\delimsize\|\;#2%
}
\begin{document}
\title{Parallel Repetition for the GHZ Game: A Simpler Proof}
\author{Uma Girish\thanks{ Department of Computer Science, Princeton University. E-mail: \href{ugirish@cs.princeton.edu}{\texttt{ugirish@cs.princeton.edu}.} Research supported by the Simons Collaboration on Algorithms and Geometry, by a Simons Investigator Award and by the National Science Foundation grants No. CCF-1714779, CCF-2007462.} \and Justin Holmgren\thanks{NTT Research.  E-mail: \href{justin.holmgren@ntt-research.com}{\texttt{justin.holmgren@ntt-research.com}.}} \and Kunal Mittal\thanks{ Department of Computer Science, Princeton University. E-mail: \href{kmittal@cs.princeton.edu}{\texttt{kmittal@cs.princeton.edu}.} Research supported by the Simons Collaboration on Algorithms and Geometry, by a Simons Investigator Award and by the National Science Foundation grants No. CCF-1714779, CCF-2007462.} \and Ran Raz\thanks{ Department of Computer Science, Princeton University.  E-mail: \href{ranr@cs.princeton.edu}{\texttt{ranr@cs.princeton.edu}.}  Research supported by the Simons Collaboration on Algorithms and Geometry, by a Simons Investigator Award and by the National Science Foundation grants No. CCF-1714779, CCF-2007462. } \and Wei Zhan\thanks{ Department of Computer Science, Princeton University.  E-mail: \href{weizhan@cs.princeton.edu}{\texttt{weizhan@cs.princeton.edu}.} Research supported by the Simons Collaboration on Algorithms and Geometry, by a Simons Investigator Award and by the National Science Foundation grants No. CCF-1714779, CCF-2007462.}}
\date{}
\maketitle

\begin{abstract}
We give a new proof of the fact that the parallel repetition of the (3-player) GHZ game reduces the value of the game to zero polynomially quickly. That is, we show that the value of the $n$-fold $\GHZ$ game is at most $n^{-\Omega(1)}$. This was first established by Holmgren and Raz~\cite{HR20}. We present a new proof of this theorem that we believe to be simpler and more direct. 
Unlike most previous works on parallel repetition, our proof makes no use of information theory, and relies on the use of Fourier analysis.

The GHZ game~\cite{GHZ} has played a foundational role in the understanding of quantum information theory, due in part to the fact that quantum strategies can win the GHZ game with probability $1$.
It is possible that improved parallel repetition bounds may find applications in this setting. 

Recently, Dinur, Harsha, Venkat, and Yuen~\cite{DHVY17} highlighted the GHZ game as a simple three-player game, which is in some sense maximally far from the class of multi-player games whose behavior under parallel repetition is well understood.
Dinur et al. conjectured that parallel repetition decreases the value of the GHZ game exponentially quickly, and speculated that progress on proving this would shed light on parallel repetition for general multi-player (multi-prover) games.
\end{abstract}

\clearpage

\section{Introduction} \label{sec:introduction}
The focus of this paper is multi-player games, and in particular their asymptotic behavior under parallel repetition.

Multi-player games consist of a one-round interaction between a referee and $k$ players.  In this interaction, the referee first samples a ``query'' $(q_1, \ldots, q_k)$ from some joint query distribution $\cQ$, and for each $i$ sends $q_i$ to the $i^{th}$ player.  The players are required to respectively produce ``answers'' $a_1, \ldots, a_k$ without communicating with one another (that is, each $a_i$ is a function only of $q_i$) and they are said to \emph{win} the game if $(q_1, \ldots, q_k, a_1, \ldots, a_k)$ satisfy some predicate $W$ that is fixed and associated with the game.

Suppose that a game $G$ has the property that the maximum probability with which players can win is $1 - \epsilon$, no matter what strategy they use.  This quantity is called the \textdef{value of $G$}.  The parallel repetition question \cite{FortnowRS88} asks
\begin{quote}
    \emph{How well can the players concurrently play in $n$ independent copies of $G$?}
\end{quote}

More precisely, consider the following $k$-player game, which we call the \textdef{$n$-wise parallel repetition} of $G$ and denote by $G^n$:
\begin{enumerate}
    \item The referee samples, for each $i \in [n]$ independently, query tuples $(q_1^i, \ldots, q_k^i) \sim \cQ$.  We refer to the index $i$ as a \textdef{coordinate} of the parallel repeated game.
    \item The $j^{th}$ player is given $(q_j^1, \ldots, q_j^n)$ and is required to produce a tuple $(a_j^1, \ldots, a_j^n)$.
    \item The players are said to win in coordinate $i$ if $(q_1^i, \ldots, q_k^i, a_1^i, \ldots, a_k^i)$ satisfies $W$.  They are said to win (without qualification) if they win in every coordinate $i \in [n]$.
\end{enumerate}

One might initially conjecture %\footnote{Indeed, this was asserted by \cite{FortnowRS88}} 
that the value of $G^n$ is $(1 - \epsilon)^n$.   However, this turns out not to be true~\cite{Fortnow89,Feige91,FeigeV02,Raz11}, as players may benefit from correlating their answers across different coordinates.  Still, Raz showed that if $G$ is a two-player game, then the value of $G^n$ is $2^{-\Omega(n)}$, where the $\Omega$ hides a game-dependent constant~\cite{Raz98,Hol09}. Tighter results, based on the value of the initial game are also known \cite{DS14, BG15}. For many applications, such bounds are qualitatively as good as the initial flawed conjecture.

Games involving three or more players have proven more difficult to analyze, and the best known general bound on their parallel repeated value is due to Verbitsky~\cite{Verbitsky94}.  This bound states that the value of $G^n$ approaches $0$, but the bound is \emph{extremely} weak (it shows that the value is at most $\frac{1}{\alpha(n)}$, where $\alpha$ denotes an inverse Ackermann function).  The weakness of this bound is generally conjectured to reflect limitations of current proof techniques rather than a fundamental difference in the behavior of many-player games.  In the technically incomparable but related \emph{no-signaling setting} however, Holmgren and Yang showed that three-player games genuinely behave differently than two-player games~\cite{HolmgrenY19}.  Specifically, they showed that there exists a three-player game with ``no-signaling value'' bounded away from 1 such that no amount of parallel repetition reduces the no-signaling value at all. 

Parallel repetition is a mathematically natural operation that we find worthy of study in its own right.   At the same time, parallel repetition bounds have found several applications  in theoretical computer science (see this survey by \cite{Raz10}).
For example, parallel repetition of 2 player games shares intimate connections with multi-player interactive proofs \cite{BOGKW88}, probabilistically checkable proofs and hardness of approximation \cite{BGS98, Fei98, Has01}, geometry of foams \cite{FKO07, KORW08, AK09}, quantum information \cite{CHTW04}, and communication complexity \cite{PRW97, BBCR13}. Recent work also shows that strong parallel repetition for a particular class of multiprover games implies new time lower bounds on Turing machines that can take advice \cite{MR21}.

Dinur et al. \cite{DHVY17} describe a restricted class of multi-player games for which Raz's approach generalizes (giving exponential parallel bounds).  Specifically, they consider games whose query distribution satisfies a certain connectivity property.  For games outside this class, Verbitsky's bound was the best known.
Dinur et al. highlighted one simple three-player game, called the GHZ game~\cite{GHZ}, that in some sense is maximally far from the aforementioned tractable class of multi-player games.  In the GHZ game, the players' queries are $(q_1, q_2, q_3)$ chosen uniformly at random from $\{0,1\}^3$ such that $q_1 \oplus q_2 \oplus q_3 = 0$, and the players' goal is to produce $(a_1, a_2, a_3)$ such that $a_1 \oplus a_2 \oplus a_3 = q_1 \lor q_2 \lor q_3$.  Dinur et al. conjectured that parallel repetition decreases the value of the GHZ game exponentially quickly, and speculated that progress on proving this would shed light on parallel repetition for general games.
The GHZ game has also played a foundational role in the understanding of quantum information theory, due in part to the fact that quantum strategies can win the GHZ game with probability $1$.  It is possible that improved parallel repetition bounds will find applications in this setting as well. 

In a recent work, Holmgren and Raz~\cite{HR20} proved the following polynomial upper bound on the parallel repetition of the GHZ game:
\begin{theorem}
    The value of the $n$-wise repeated GHZ game is at most $n^{-\Omega(1)}$.
\end{theorem}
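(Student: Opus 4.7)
The plan is to analyze the winning probability via a Fourier expansion over subsets $S \subseteq [n]$ and then bound each resulting term using Fourier analysis on the Boolean cube $\{0,1\}^n$.

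Parameterize the queries by $(X_1, X_2) \in \{0,1\}^n \times \{0,1\}^n$ uniform and independent, with $X_3 = X_1 \oplus X_2$; each player $j$ applies a strategy $f_j \colon \{0,1\}^n \to \{0,1\}^n$. The single-coordinate win condition is $f_1(X_1)_i \oplus f_2(X_2)_i \oplus f_3(X_3)_i = R^i$, where $R^i = q_1^i \vee q_2^i \vee q_3^i$. Writing $\mathbf{1}[\text{win at coord } i] = \tfrac{1}{2}\bigl(1 + (-1)^{f_1(X_1)_i \oplus f_2(X_2)_i \oplus f_3(X_3)_i \oplus R^i}\bigr)$ and expanding the product over $i$,
\[
\Pr[\text{win}] = 2^{-n} \sum_{S \subseteq [n]} \mathbb{E}\bigl[\chi_S(f_1(X_1))\, \chi_S(f_2(X_2))\, \chi_S(f_3(X_3))\, \chi_S(R)\bigr],
\]
where $\chi_S(y) = (-1)^{\sum_{i \in S} y_i}$ is the standard Walsh character.

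For each $S$, define the $\pm 1$-valued function $g_j^S(X_j) := \chi_S(f_j(X_j))$ on $\{0,1\}^n$. Since $X_3 = X_1 \oplus X_2$ and $\chi_S(R)$ factorizes across coordinates $i \in S$ as a function of $(X_1^i, X_2^i)$, Fourier-expanding each $g_j^S$ and computing the resulting expectation coordinate-by-coordinate, a short calculation shows the contribution from Fourier supports $(T_1, T_2, T_3)$ vanishes unless $T_1 \triangle T_3 \subseteq S$ and $T_2 \triangle T_3 \subseteq S$; otherwise its magnitude is exactly $2^{-|S|}$, with a sign determined by $|S \setminus ((T_1 \triangle T_3) \cup (T_2 \triangle T_3))|$. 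Reorganizing,
\[
\Pr[\text{win}] = 2^{-n} \sum_{S \subseteq [n]} 2^{-|S|} \sum_{A, B \subseteq S} (-1)^{|S| - |A \cup B|} \sum_{T \subseteq [n]} \widehat{g_1^S}(T \triangle A)\, \widehat{g_2^S}(T \triangle B)\, \widehat{g_3^S}(T).
\]

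The main obstacle is to bound this compound sum tightly enough to extract polynomial decay in $n$. Parseval gives $\sum_T |\widehat{g_j^S}(T)|^2 = 1$, and Cauchy--Schwarz on the inner triple Fourier convolution (reminiscent of a Gowers $U^2$-type inner product) yields a bound, but naïvely the outer sum over $(A, B) \subseteq S$ blows up like $4^{|S|}$, producing a useless estimate of order $(3/2)^n$. The essential step is therefore to exhibit cancellation in the signed weights $(-1)^{|S| - |A \cup B|}$: I would reinterpret the inner signed sum over $(A, B)$ as a structured inner product that encodes the GHZ constraint --- morally, a triple test of the strategies against the correlated inputs --- and bound it by a combination of Cauchy--Schwarz with an inductive reduction on $n$ that trades the $2^{-|S|}$ damping against the cancellation. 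Finally, splitting by $|S|$ (with very small $|S|$ controlled by the single-shot value $3/4$ of the GHZ game, very large $|S|$ damped by the $2^{-|S|}$ prefactor, and medium $|S|$ controlled by the cancellation argument) should balance out to the claimed $n^{-\Omega(1)}$ bound.
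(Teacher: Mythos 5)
Your expansion of the winning probability is a correct identity, and your computation of the per-coordinate factor (magnitude exactly $2^{-|S|}$, with the support constraints $T_1\triangle T_3\subseteq S$, $T_2\triangle T_3\subseteq S$) checks out. But the proof stops exactly where the theorem lives: the ``essential step'' exhibiting cancellation in the signed sum over $(A,B)\subseteq S$ is only announced, not carried out. As you yourself note, term-by-term estimates (Parseval plus Cauchy--Schwarz) are defeated by the $4^{|S|}$ choices of $(A,B)$, which exactly offsets the $2^{-|S|}$ damping, and no mechanism --- no lemma, no inequality, no specified induction --- is given that actually extracts the cancellation. The proposed split by $|S|$ does not rescue this: the small-$|S|$ terms are \emph{not} controlled by the single-shot value $3/4$ (an individual term $\E[\chi_S(f_1(X_1))\chi_S(f_2(X_2))\chi_S(f_3(X_3))\chi_S(R)]$ is the bias of a strategy in a parity-of-ORs XOR game over the coordinates of $S$, and relating such biases to the value of the repeated game is precisely the open difficulty; for three players no XOR-repetition theorem is available), and the large-$|S|$ terms are not damped once the $4^{|S|}$ blow-up is taken into account. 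More structurally, any bound that treats each $S$ in isolation and uses only Parseval-type information about the functions $g_j^S=\chi_S\circ f_j$ cannot work: since the prefactor is $2^{-n}$ and there are $2^n$ sets $S$, one needs either that typical terms are $n^{-\Omega(1)}$ or that there is cancellation across $S$, and both of these require using the fact that all the $g_j^S$ arise from a \emph{single} strategy $f_j$ --- a constraint your argument never exploits.

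For comparison, the paper takes a completely different route that sidesteps this expansion. It fixes no strategy at the Fourier-analytic stage: instead it conditions the query distribution $\cQ^n$ on a product event $E=E_1\times E_2\times E_3$ of noticeable probability, partitions $(\F_2^n)^3$ into affine subspaces on which the restricted sets $E_i|_{\pi_i}$ have small non-trivial Fourier coefficients (\cref{cor:pseudorandom}), and shows via counting estimates (\cref{lem:fourierlemmaedge}, \cref{claim:l1norm}, \cref{claim:l2norm}) that the conditioned distribution is close in total variation to a mixture of ``bow tie'' distributions, each of which is as hard as the unrepeated GHZ game in every coordinate where the bow tie differs (\cref{claim:embed}), which is most coordinates (\cref{claim:differsalot}); a parallel repetition criterion (\cref{lem:parallel_repetition_criteria}) then converts per-coordinate hardness under conditioning into the $n^{-\Omega(1)}$ bound. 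There Fourier analysis is used only to count combinatorial configurations inside pseudorandom sets, never to expand a strategy's winning probability, which is how the quantifier over all strategies is handled. To salvage your approach you would need a genuinely new ingredient supplying the claimed cancellation, and nothing of that kind is currently known.
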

Our main contribution is a different proof of this theorem that, in our view, is significantly simpler and more direct than the proof of \cite{HR20}.  Like \cite{HR20}, we actually do not rely on any properties of the GHZ game other than its query distribution, and in particular we do not rely on specifics of the win condition.
Furthermore, unlike most previous works on parallel repetition, our proof makes no use of information theory, and instead relies on the use of Fourier analysis.

\subsection{Technical Overview}
Let $\cP$ denote the distribution of queries in the $n$-wise parallel repeated $\GHZ$ game. Let $\alpha=\Theta(1/n^{\eps})$ for a small constant $\eps>0$ and $E=E_1\times E_2\times E_3$ be any product event with significant probability under $\cP$, i.e., $\cP(E)\ge \alpha$.  The core of our proof is establishing that  for a random coordinate $i \in [n]$, the query distribution $\cP|E$ ($\cP$ conditioned on $E$) is mildly hard in the $i^{th}$ coordinate.  That is, given queries sampled from $\cP|E$, the players' maximum winning probability in the $i^{th}$ coordinate is bounded away from $1$.  Using standard arguments from the parallel repetition literature, this will imply an inverse polynomial bound for the value of the $n$-fold $\GHZ$ game.  The difficulty, as usual, is that the $n$ different queries in $\cP | E$ may not be independent.   

Our approach at a high level is to:
\begin{enumerate}
    \item Identify a class $\cD$ of simple distributions (over queries for the $n$-wise repeated GHZ game) such that it is easy to analyze (in step~\ref{step:most-hard} below) which coordinates are hard for any given $D \in \cD$.  By hard, we mean that the players' maximum winning probability in the $i^{th}$ coordinate is $\frac{3}{4}$.
    \item Approximate $\cP | E$ by a convex combination of distributions from $\cD$.  That is, we write \[
    \cP | E \approx  \sum_j p_j D_j,
    \] where $\{D_j\}$ are distributions in $\cD$, $p_j$ are non-negative reals summing to $1$, and $\approx$ denotes closeness in total variational distance.
    \item \label{step:most-hard} Show that in the above convex combination, ``most'' of the $D_i$ have many hard coordinates.  More precisely, if we sample $j$ with probability $p_j$, then the expected fraction of coordinates in which $D_j$ is hard is at least a constant (say $1/3$).
\end{enumerate}
Completing this approach implies that if $i \in [n]$ is uniformly random, then the $i^{th}$ coordinate of $\cP | E$ can be won with probability at most $1 - \Omega(1)$.  We elaborate on each of these steps below.

\paragraph{Bow Tie Distributions}
For our class of ``simple'' distributions $\cD$, we introduce the notion of a ``bow tie'' distribution.  We then define $\cD$ to be the set of all bow tie distributions.  A \textdef{bow tie} is a set $B$ of the form \[
\left \{ 
\begin{array}{c}
(x_0, y_0, z_0), \\
(x_0, y_1, z_1), \\
(x_1, y_0, z_1), \\
(x_1, y_1, z_0)
\end{array}
\right \} \subseteq (\F_2^n)^3
\]
such that for each $(x,y,z)$ in $B$, we have $x + y + z = 0$.    In particular this requires that $x_0 + x_1 = y_0 + y_1 = z_0 + z_1$.  A \textdef{bow tie distribution} is the uniform distribution on a bow tie.  Our name of ``bow tie'' is based on the fact that bow ties are thus determined by $\{(x_0, y_0), (x_0, y_1), (x_1, y_0), (x_1, y_1) \}$, which we sometimes view as a set of edges in a graph.  In this case, bow ties are special kinds of $K_{2,2}$ subgraphs, where $K_{2,2}$ denotes the complete bipartite graph.

The main property of a bow tie distribution $D$ is that for every coordinate $i$ for which $(x_0)_i \neq (x_1)_i$ (equivalently $(y_0)_i \neq (y_1)_i$, or  $(z_0)_i \neq (z_1)_i$), the $i^{th}$ coordinate of $D$ is as hard as the GHZ game (i.e. players cannot produce winning answers for the $i^{th}$ coordinate with probability more than $\frac{3}{4}$).  This follows by ``locally embedding'' the (unrepeated) GHZ query distribution into the $i^{th}$ coordinate of $D$ as follows.  We first swap $x_0 \leftrightarrow x_1$, $y_0 \leftrightarrow y_1$, $z_0 \leftrightarrow z_1$ as necessary to ensure that \begin{equation}
    \label{eq:standard-form}
    (x_0)_i = (y_0)_i = (z_0)_i = 0.
\end{equation}
An even number of swaps are required to do this by the assumption that $x_0 + y_0 + z_0 = 0$, and bow ties are invariant under an even number of such swaps.   Thus \cref{eq:standard-form} is without loss of generality.  Suppose $\bar{f}_1, \bar{f}_2, \bar{f}_3 : \F_2^n \to \F_2$ comprise a strategy for the $i^{th}$ coordinate of $D$. Then a strategy $f_1, f_2, f_3 : \F_2 \to \F_2$ for the basic (unrepeated) GHZ game can be constructed as
\[
\begin{array}{l}
f_1(b) = \bar{f}_1(x_b) \\
f_2(b) = \bar{f}_2(y_b) \\
f_3(b) = \bar{f}_3(z_b).
\end{array}
\]

The winning probability of this strategy is the same as the winning probability of $\bar{f}_1, \bar{f}_2, \bar{f}_3$ in the $i^{th}$ coordinate because $\big ((x_{b_1})_i, (y_{b_2})_i, (z_{b_3})_i \big) = (b_1, b_2, b_3)$. Hence both probabilities are at most 3/4.

\paragraph{Approximating $\cP | E$ by Bow Ties}
We now sketch how to approximate $\cP | E$ by a convex combination of bow tie distributions, where $E$ is a product event $E_1 \times E_2 \times E_3$.  We assume for now that the non-zero Fourier coefficients of each $E_j$ are small.  We will return to this assumption at the end of the overview --- it turns out to be nearly without loss of generality.

We show that $\cP | E$ is close in total variational distance to the distribution obtained by sampling a \emph{uniformly random} bow tie $B \subseteq E$, and then outputting a random element of $B$.  The latter distribution is equivalent to sampling $(x, y, z)$ with probability proportional to the number of bow ties $B \subseteq E$ that contain $(x, y, z)$.  This number is
\begin{equation}
\label{eq:bowtie-count}
\begin{cases}
\left (\sum_{z' \in \F_2^n} E_1(y + z')  E_2(x + z') E_3(z') \right ) - 1 & \text{if $(x, y, z) \in \supp(\cP | E)$} \\
0 & \text{otherwise,}
\end{cases}
\end{equation}
where we identify $E_1$, $E_2$, and $E_3$ with their indicator functions. Note that we are subtracting $1$ to cancel the term corresponding to $z'=z$.

Intuitively, the fact that all $E_j$ have small Fourier coefficients means that they look random with respect to linear functions.  Thus, one might guess that the above sum is close to $2^n \cdot \mu(E_1) \mu(E_2) \mu(E_3)$ for most $(x, y, z) \in \supp(\cP | E)$, where $\mu(S)=|S|/2^n$ denotes the measure of $S$ under the uniform distribution on $\bbF_2^n$.  If ``close to'' and ``most'' have the right meanings, then this would imply that our distribution is close in total variational distance to $\cP | E$ as desired.

Our full proof indeed establishes this.  More precisely, we view \cref{eq:bowtie-count} as a vector indexed by $(x, y, z)$ and establish bounds on that vector's $\ell_1$ and $\ell_2$ norms as a criterion for near-uniformity.  In the process our proof repeatedly uses the following claims (see \cref{lem:fourierlemmaedge}).  For all  sets $S, T \subseteq \F_2^n$ that are sufficiently large, we have
\[  \E_{\substack{z\sim \bbF_2^n\\x\sim \bbF_2^n}}[S(x)\cdot T(x+z)\cdot E_3(z)] \approx  \mu(S)\cdot \mu(T)\cdot \mu(E_3) \]
and
\[ 
\E_{z\sim \bbF_2^n}\sbra{\pbra{ \E_{x\sim \bbF_2^n}[S(x)\cdot E_2(x+z)]}^2\cdot E_3(z)} \approx  \mu(S)^2\cdot \mu(E_2)^2\cdot \mu(E_3).
\]

\paragraph{Most Bow Ties are Hard in Many Coordinates}

For the final step of our proof, we need to show that the distribution of bow ties analyzed in the previous step produces (with high probability) bow ties that differ in many coordinates.  

We begin by parameterizing a bow tie by $(x_0, y_0, x_0 \oplus x_1)$ and noting that in the previous step, we essentially showed that $E$ contains $2^{3n - O(\log n)}$ different bow ties.  The $O(\log n)$ term in the exponent arises from the fact that the events $\{E_j\}$ have density in $\F_2^n$ that is inverse polynomial in $n$.  A simple counting argument then shows that for a random bow tie, the min-entropy of $x_0 \oplus x_1$ is close to $n$.  This means that $x_0 \oplus x_1$ is close to the uniform distribution in the sense that any event occurring with probability $p$ under the uniform distribution occurs with probability $p \cdot n^{O(1)}$ under the distribution of $x_0 \oplus x_1$.  Thus we can finally apply a Chernoff bound to deduce that with all but $2^{-\Omega(n)}$ probability, $x_0 \oplus x_1$ has Hamming weight at least $n / 3$.

In other words, a bow tie sampled uniformly at random differs in at least a $\frac{1}{3}$ fraction of coordinates.  By the main property of bow ties, this implies that the corresponding bow tie distribution is hard on a $\frac{1}{3}$ fraction of coordinates (indeed, the same set of coordinates).

\paragraph{Handling General Events}
For general (product) events $E = E_1 \times E_2 \times E_3$ (where the sets $\{E_i\}$ need not have small Fourier coefficients), we can partition the universe $\F_2^n \times \F_2^n \times \F_2^n$ into parts $\pi$ such that for most of the parts $\pi$, the event $E$ restricted to $\pi$ has the structure that we already analyzed.  For this to make sense, we ensure several properties of the partition.  First, $\pi$ should be a product set ($\pi = \pi_1 \times \pi_2 \times \pi_3$) so that $E \cap \pi$ is a product set as well, i.e. $E \cap \pi$ has the form $\tilde{E}_1 \times \tilde{E}_2 \times \tilde{E}_3$.  Second, each $\pi_i$ should be an affine subspace of $\F_2^n$ so that we can do Fourier analysis with respect to this subspace.  Finally $\pi_1$, $\pi_2$, and $\pi_3$ should  all be affine shifts of the \emph{same} linear subspace so that the set $\{(x, y, z) \in \pi : x + y + z = 0\}$ has the same Fourier-analytic structure as the parallel repeated GHZ query set $\{(x, y, z) \in (\F_2^{n'})^3 : x + y + z = 0\}$ for some $n' < n$.

We prove the existence of such a partition with $n'$ not too small ($n' = n - o(n)$) by a simple iterative approach, which is similar to \cite{HR20}.

\subsection{Comparison to \cite{HR20}}
Our proof has some similarity to \cite{HR20} --- in particular, both proofs partition $(\F_2^n)^3$ into subspaces according to Fourier-analytic criteria and analyze these subspaces separately --- but the resemblance ends there.  In fact, there are fundamental high-level differences between the two proofs.  

The biggest qualitative difference is that our high-level approach decomposes any conditional distribution $\cP | E$ into components (bow tie distributions) for which many coordinates are hard.  \cite{HR20} takes an analogous approach, but it establishes a weaker result that differs in the order of quantifiers: it first fixes a strategy $f$, and then decomposes $\cP | E$ into components such that $f$ performs poorly on many coordinates of many components.  This difference is due to the fact that \cite{HR20} uses uniform distributions on high-dimensional affine spaces as their basic ``hard'' distributions.  It is not in general possible to express $\cP | E$ as a convex combination of such distributions (for example if each $E_j$ is a uniformly random subset of $\F_2^n$).  Instead, \cite{HR20} expresses $\cP | E$ as a convex combination of ``pseudo-affine'' distributions.  This significantly complicates their proof, and we avoid this complication entirely by our use of bow tie distributions, which are novel to this work.

The remainder of our proof (the analysis of hardness within each part of the partition) is entirely different.
\section{Notation \& Preliminaries}
 
A significant portion of these preliminaries is taken verbatim from \cite{HR20}.

We write $\exp(t)$ to denote $e^{t}$ for $t\in \bbR$. 

Let $n\in \bbN$. For a vector $v\in\bbR^n$ and $i\in[n]$, we write $v(i)$ or $v^i$ to denote the $i$-th coordinate of $v$. For $p\in\bbN$, we write $\|v\|_p\eqdef\pbra{\sum_{i\in[n]}|v(i)|^p}^{1/p}$ to denote the $\ell_p$ norm of $v$.
For $z\in \{0,1\}^*$,  $\hwt(z)\eqdef\|z\|_1$ denotes the Hamming weight of $z$. 
  
We crucially rely on the Cauchy-Schwarz inequality.
\begin{fact}[Cauchy-Schwarz]
Let $k\in \bbN$ and $a_1,\ldots,a_k,b_1,\ldots,b_k\in \bbR$. Then, 
$\sum_{i=1}^k \abs{ a_i \cdot  b_i }\le \sqrt{\sum_{i=1}^k a_i^2 }\cdot \sqrt{\sum_{i=1}^k b_i^2}$.
\end{fact}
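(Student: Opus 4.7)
The plan is a standard normalization argument. First, I observe that $\sum_i |a_i \cdot b_i| = \sum_i |a_i| \cdot |b_i|$, so without loss of generality I may replace $a_i, b_i$ by $|a_i|, |b_i|$ and assume throughout that $a_i, b_i \ge 0$; the right-hand side of the claimed inequality is unaffected by this replacement. Let $A = \sqrt{\sum_i a_i^2}$ and $B = \sqrt{\sum_i b_i^2}$. If $A = 0$ or $B = 0$, then all the $a_i$ (respectively all the $b_i$) vanish and the claim is trivial, so I may further assume both are strictly positive and normalize: set $\hat{a}_i = a_i / A$ and $\hat{b}_i = b_i / B$, so that $\sum_i \hat{a}_i^2 = \sum_i \hat{b}_i^2 = 1$.

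The goal then reduces to showing $\sum_i \hat{a}_i \hat{b}_i \le 1$. This follows from the pointwise AM-GM bound $xy \le \tfrac{1}{2}(x^2 + y^2)$, summed over $i$:
\[
\sum_i \hat{a}_i \hat{b}_i \;\le\; \frac{1}{2}\sum_i \left(\hat{a}_i^2 + \hat{b}_i^2\right) \;=\; \frac{1}{2}(1 + 1) \;=\; 1.
\]
Multiplying through by $AB$ gives $\sum_i a_i b_i \le A B$, which after undoing the initial absolute-value reduction is exactly the desired inequality.

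An equally short alternative is the classical discriminant proof: the polynomial $t \mapsto \sum_i (a_i t - b_i)^2 = A^2 t^2 - 2 t \sum_i a_i b_i + B^2$ is pointwise non-negative in $t \in \bbR$, so its discriminant $4 (\sum_i a_i b_i)^2 - 4 A^2 B^2$ must be non-positive, yielding $(\sum_i a_i b_i)^2 \le A^2 B^2$ and hence the claim upon taking square roots.

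I do not anticipate any genuine obstacle here, as Cauchy--Schwarz is a foundational inequality with many well-known one-line proofs. The only points requiring mild care are (i) the initial reduction to non-negative entries so that $|a_i \cdot b_i|$ on the left becomes $a_i b_i$, and (ii) handling the degenerate cases $A = 0$ or $B = 0$ before dividing. Both are immediate.
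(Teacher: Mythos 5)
Your proof is correct. Note, however, that the paper states Cauchy--Schwarz as a \emph{Fact} without proof, treating it as a standard, citation-free inequality, so there is no paper proof to compare against. Both of the arguments you give --- the normalization-plus-AM-GM argument and the discriminant argument --- are valid; you also correctly handle the two points that are easy to gloss over: reducing to non-negative entries so that the left-hand side $\sum_i \abs{a_i b_i}$ becomes $\sum_i a_i b_i$, and disposing of the degenerate case $A = 0$ or $B = 0$ before dividing. Either argument alone would suffice; including both is harmless but redundant.
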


\subsection{Set Theory}

Let $\Omega$ be a universe. By a partition of $\Omega$, we mean a collection of pairwise disjoint subsets of $\Omega$, whose union equals $\Omega$. 
If $\Pi$ is a partition of $\Omega$ and $\omega$ is an element of $\Omega$, we will write $\Pi(\omega)$ to denote the (unique) element of $\Pi$ that contains $\omega$. Thus, we can view $\Pi$ as a function $\Pi : \Omega \to 2^\Omega$.
  
For a set $S\subseteq \Omega$, we identify $S$ with its indicator function $S:\Omega\to\{0,1\}$ defined at $\omega\in \Omega$ by 
\[ S(\omega)=\begin{cases} 1 & \text{if }\omega\in S\\ 0 & \text{otherwise.}\end{cases} \] 
  
For sets $S,T\subseteq \Omega$ such that $T\neq \emptyset$, we use $S|_T\subseteq T$ to denote the set $S\cap T$ when viewed as a subset of $T$. In particular, $S|_T$ is an indicator function from $T$ to $\{0,1\}$.
  
\subsection{Probability Theory}
\label{sec:probability}

\paragraph*{Probability Distributions.} Let $P$ be a distribution over a universe $\Omega$. We sometimes think of $P$ as a vector in $\bbR^{|\Omega|}$ whose value in coordinate $\omega\in \Omega$ is $P(\omega)$. In particular, we use $\|P-Q\|_1$ to denote the $\ell_1$ norm of the vector $P-Q \in \bbR^{|\Omega|}$, where $P$ and $Q$ are probability distributions. We use $\omega\sim P$ to denote a random element $\omega$ distributed according to $P$. We use $\supp(P) = \{\omega \in \Omega : P(\omega) > 0\}$ to denote the \textdef{support} of the distribution $P$. 
  
\paragraph*{Random Variables} Let $\Sigma$ be any alphabet. We say that $X : \Omega \to \Sigma$ is a $\Sigma$-valued random variable. If $\Sigma=\bbR$, we say that the random variable is real-valued. If $X$ is a real-valued random variable, the \textdef{expectation of $X$
    under $P$} is denoted $\E_{\omega \sim P}[X(\omega)]$. Often, the underlying distribution $P$ is implicit, in which case we simply use $\E[X]$. If $X$ is a $\Sigma$-valued random variable and $P$ is a probability distribution, we write
$P_X$ or $X(P)$ to denote the induced probability distribution of $X$ under $P$, i.e., $P_X(\sigma)=(X(P))(\sigma) \eqdef
P(X = \sigma)$ for all $\sigma\in \Sigma$. In particular, we say that $X$ is distributed according to $P_X$ and we use $\sigma \sim X(P)$ to denote a random variable $\sigma$ distributed according to $P_X$. The distribution $P$ is often implicit, and we identify $X$ with the underlying distribution $P_X$.  
 
\paragraph*{Events.}  We refer to subsets of $\Omega$ as \textdef{events}.  We use standard shorthand
for denoting events.  For instance, if $X$ is a $\Sigma$-valued random variable
and $x \in \Sigma$, we write $X = x$ to denote the event
$\{\omega \in \Omega : X(\omega) = x\}$. Similarly, for a subset $F\subseteq \Sigma$, we write $X\in F$ to denote the event $\{ \omega\in \Omega : X(\omega)\in F\}$.
We use $P(E)$ to denote the probability of $E$ under $P$. When $P$ is implicit, we use the notation $\Pr(E)$ to denote $P(E)$. 

\paragraph*{Conditional Probabilities}
  Let $E \subseteq \Omega$ be an event with $P(E) > 0$. Then the
  \textdef{conditional distribution of $P$ given $E$} is denoted
  $(P | E) : \Omega \to \bbR$ and is defined to be
\[
  (P|E)(\omega) = \begin{cases}
    P(\omega) / P(E) & \text{if $\omega \in E$} \\
    0 & \text{otherwise.}
  \end{cases}
\]
If $E$ is an event, we write $P_{X|E}$ as shorthand for $(P | E)_X$.

\paragraph*{Measure under Uniform Distribution}\label{sec:measure}
For any set $S\subseteq \Omega$, we sometimes identify $S$ with the uniform distribution over $S$. In particular, we use $x\sim S$ to denote $x$ sampled according to the uniform distribution on $S$. For $S,\pi\subseteq \Omega$ such that $\pi\neq \emptyset$, we use $\mu_\pi(S)=\frac{|S\cap \pi|}{|\pi|}$ to denote the measure of $S$ under the uniform distribution over $\pi$. When $\pi=\Omega$, we omit the subscript and simply use $\mu(S)$.

\subsection{Fourier Analysis}
\label{sec:fourier}
\paragraph*{Fourier Analysis over Subspaces} For any (finite) vector space $\cV$ over $\F_2$, the \textdef{character group of
  $\cV$}, denoted $\widehat{\cV}$, is the set of group homomorphisms mapping $\cV$ (viewed
as an additive group) to $\pmone$ (viewed as a multiplicative group).  Each such
homomorphism is called a \textdef{character} of $\cV$. For functions mapping $\cV \to \bbR$, we define the inner product
\[
  \langle f, g \rangle \eqdef \E_{x \sim \cV} \left [ f(x) {g(x)} \right ].
\] 
The character group of $\cV$ forms an orthonormal basis under this inner product. We refer to the all-ones functions $\chi:\cV\to \{-1,1\}, \chi\equiv 1$ as the \textdef{trivial character} or the \textdef{zero character} and denote this by $\chi=\emptyset$. 

For all characters $\chi\neq \emptyset$, since $\langle \chi, \emptyset \rangle =0$, we have $\E_{x\sim \cV}\sbra{\chi(x)}=0$, in particular, $\chi(\cV)$ is a uniform $\{\pm1\}$-random variable. Let $\emptyset\neq S\subseteq \cV$ be a set. Then $\mu_{\cV}(S)\triangleq \frac{|S\cap \cV|}{|\cV|}=\widehat{S}(\emptyset)$, where we identify $S$ with its indicator function $S:\cV\to \{0,1\}$ as mentioned before. For $\chi\in \widehat{\cV}$, we have $\E_{x\sim S}\sbra{\chi(x)}=\tfrac{\widehat{S}(\chi)}{\widehat{S}(\emptyset)}$.
\begin{fact}
  Given a choice of basis for $\cV$, there is a canonical isomorphism between $\cV$
  and $\widehat{\cV}$.  Specifically, if $\cV = \F_2^n$, then the characters of $\cV$ are
  the functions of the form
  \[
    \chi_\gamma(v) = (-1)^{\gamma \cdot v}
  \]
  for $\gamma \in \F_2^n$.
\end{fact}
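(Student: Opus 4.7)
The plan is to prove the fact in two pieces. First, I would verify that each proposed function $\chi_\gamma$ is indeed a character of $\cV = \F_2^n$. Second, I would show that every character arises this way from a unique $\gamma$, which gives the claimed isomorphism $\cV \cong \widehat{\cV}$. The statement that the isomorphism is canonical \emph{given a choice of basis} is reflected in the fact that the whole construction is phrased in terms of the coordinates of $v$ in the standard basis of $\F_2^n$; abstractly, $\cV$ and $\widehat{\cV}$ are merely two $n$-dimensional $\F_2$-vector spaces and an explicit identification requires naming a basis.

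For the first step, I would fix $\gamma \in \F_2^n$ and check that $\chi_\gamma$ is a group homomorphism from $(\cV, +)$ to $(\pmone, \cdot)$. For any $u, v \in \cV$, the integer parity of $\gamma \cdot u + \gamma \cdot v$ coincides with $\gamma \cdot (u+v) \in \F_2$, so
\[ \chi_\gamma(u+v) = (-1)^{\gamma \cdot (u+v)} = (-1)^{\gamma \cdot u}(-1)^{\gamma \cdot v} = \chi_\gamma(u)\chi_\gamma(v). \]
Together with $\chi_\gamma(0) = 1$, this shows $\chi_\gamma \in \widehat{\cV}$.

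For the second step, I would first observe that $\gamma \mapsto \chi_\gamma$ is itself a group homomorphism from $\F_2^n$ into $\widehat{\cV}$, since $\chi_{\gamma+\gamma'}(v) = (-1)^{(\gamma+\gamma')\cdot v} = \chi_\gamma(v)\chi_{\gamma'}(v)$. Injectivity follows from evaluating at the standard basis vectors $e_1, \ldots, e_n$: if $\chi_\gamma \equiv 1$, then $(-1)^{\gamma_i} = \chi_\gamma(e_i) = 1$ for each $i$, forcing $\gamma = 0$. For surjectivity, given an arbitrary character $\chi \in \widehat{\cV}$, the values $\chi(e_i) \in \pmone$ determine $\gamma_i \in \F_2$ via $\chi(e_i) = (-1)^{\gamma_i}$; then for $v = \sum_i v_i e_i$ one has $\chi(v) = \prod_i \chi(e_i)^{v_i} = (-1)^{\sum_i \gamma_i v_i} = \chi_\gamma(v)$, so $\chi = \chi_\gamma$ because homomorphisms are determined by their values on a generating set.

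I do not expect a real obstacle here; the result is standard character theory for finite abelian groups specialized to $\F_2^n$. If one preferred to avoid the explicit surjectivity argument, an alternative would be to invoke the general fact that a finite abelian group has the same cardinality as its character group: injectivity of $\gamma \mapsto \chi_\gamma$ combined with $|\F_2^n| = |\widehat{\F_2^n}| = 2^n$ would immediately force the map to be a bijection.
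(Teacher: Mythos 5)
Your proof is correct: verifying that each $\chi_\gamma$ is a homomorphism into $\pmone$, that $\gamma \mapsto \chi_\gamma$ is an injective group homomorphism, and that every character is recovered from its values on a basis is exactly the standard argument, and the counting alternative you mention works too. The paper states this Fact as background without proof, so there is nothing to compare against; your write-up fills that gap in the expected way.
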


\begin{definition}
For any function $f : \cV \to \bbR$, its \textdef{Fourier transform} is the function
$\widehat{f} : \widehat{\cV} \to \bbR$ defined by
\[
  \widehat{f}(\chi) \eqdef \langle f, \chi \rangle = \E_{x \sim \cV} \left [ f(x) \chi(x) \right ].
\]
\end{definition}
Since the characters of $\cV$ are orthonormal and $\cV$ is finite, we can deduce that $f$ is equal to
$\sum_{\chi \in \widehat{\cV}} \widehat{f}(\chi) \cdot \chi$. 

\begin{theorem}[Plancherel]
  \label{thm:plancherel}
  For any $f, g : \cV \to \bbR$,
  \[
    \langle f, g \rangle =\sum_{\chi\in \widehat{\cV}}\widehat{f}(\chi)\cdot \widehat{g}(\chi).
  \]
\end{theorem}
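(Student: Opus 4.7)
The plan is to deduce Plancherel's identity in one line from the two facts already collected in the preliminaries: the Fourier inversion formula $f = \sum_{\chi} \widehat{f}(\chi)\cdot \chi$ (stated immediately before the theorem) and the orthonormality of the character group $\widehat{\cV}$ under $\langle\cdot,\cdot\rangle$ (asserted in Section~\ref{sec:fourier}).

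First I would expand both arguments via the inversion formula, writing
\[
f \;=\; \sum_{\chi \in \widehat{\cV}} \widehat{f}(\chi)\cdot \chi, \qquad g \;=\; \sum_{\psi \in \widehat{\cV}} \widehat{g}(\psi)\cdot \psi.
\]
Substituting these expansions into $\langle f,g\rangle$ and using bilinearity of the inner product gives
\[
\langle f, g\rangle \;=\; \sum_{\chi,\psi \in \widehat{\cV}} \widehat{f}(\chi)\,\widehat{g}(\psi)\cdot \langle \chi,\psi\rangle.
\]
Finally I would invoke orthonormality of the characters, which says $\langle \chi,\psi\rangle$ equals $1$ when $\chi = \psi$ and $0$ otherwise. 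This collapses the double sum to the desired $\sum_{\chi \in \widehat{\cV}} \widehat{f}(\chi)\widehat{g}(\chi)$.

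There is essentially no obstacle here: all of the analytic content is packaged into the orthonormality statement, which the preliminaries take as a fact. The one minor point worth flagging is that the inner product defined in the preliminaries involves no complex conjugation; this is legitimate because the characters of an $\F_2$-vector space take values in $\pmone \subseteq \bbR$, so $\overline{\chi(x)} = \chi(x)$ and the formula is genuinely symmetric in $f$ and $g$ without any bars. If one wanted to be fully self-contained, one could prove orthonormality in a line as well: for $\chi \neq \psi$, the product $\chi\psi$ is again a nontrivial character, and (as already noted in Section~\ref{sec:fourier}) any nontrivial character has expectation $0$ over $\cV$, while for $\chi = \psi$ the product is identically $1$.
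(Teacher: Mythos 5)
Your proof is correct and is the standard derivation; the paper simply states Plancherel without proof, so there is no paper argument to compare against. Your expansion via Fourier inversion plus orthonormality is exactly the textbook route, and your remark about the absence of complex conjugation (characters of an $\F_2$-vector space are $\pmone$-valued, hence real) correctly explains why the inner product as defined in the preliminaries suffices.
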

An important special case of Plancherel's theorem is Parseval's theorem:
\begin{theorem}[Parseval]
  \label{thm:parseval}
  For any $f : \cV \to \bbR$,
  \[
    \E_{x\sim \cV}\sbra{f(x)^2} =\sum_{\chi\in \widehat{\cV}} \widehat{f}(\chi)^2 .
  \]
\end{theorem}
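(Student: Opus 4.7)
The plan is to derive Parseval's theorem as an immediate special case of Plancherel's theorem (\cref{thm:plancherel}), which is stated just above and which I am free to assume. Specifically, I would instantiate Plancherel with the choice $g = f$. This substitution yields
\[
\langle f, f \rangle \;=\; \sum_{\chi \in \widehat{\cV}} \widehat{f}(\chi) \cdot \widehat{f}(\chi) \;=\; \sum_{\chi \in \widehat{\cV}} \widehat{f}(\chi)^2,
\]
and the right-hand side matches the claimed formula exactly.

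To finish, I would unfold the definition of the inner product on the left-hand side. By the definition of $\langle \cdot, \cdot \rangle$ given in the Fourier analysis preliminaries,
\[
\langle f, f \rangle \;=\; \E_{x \sim \cV} \sbra{ f(x) \cdot f(x) } \;=\; \E_{x \sim \cV}\sbra{f(x)^2}.
\]
Chaining this with the previous display gives $\E_{x \sim \cV}[f(x)^2] = \sum_{\chi} \widehat{f}(\chi)^2$, which is the statement of the theorem.

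Since the whole argument is essentially a one-line specialization, I do not anticipate any real obstacle. If Plancherel were not already available, the alternative plan would be to expand $f = \sum_{\chi} \widehat{f}(\chi) \cdot \chi$ (using the fact, noted in the preliminaries, that the characters form an orthonormal basis with respect to $\langle \cdot, \cdot \rangle$), substitute into $\E_{x \sim \cV}[f(x)^2] = \langle f, f \rangle$, and collapse cross terms via $\langle \chi, \chi' \rangle = \mathbf{1}[\chi = \chi']$; but invoking Plancherel directly is cleaner and is the route I would take here.
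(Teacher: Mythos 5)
Your proof is correct and matches the paper's approach exactly: the paper explicitly presents Parseval as "an important special case of Plancherel's theorem," and specializing Plancherel to $g = f$ together with unfolding the definition of the inner product is precisely that specialization.
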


\paragraph*{Fourier Analysis over Affine Subspaces}

Fix any subspace $\cV\subseteq \bbF_2^n$ and a vector $a\in \bbF_2^n$. Let $\cU=a+\cV$ denote the affine subspace obtained by shifting $\cV$ by $a$. For every function $f:\cV\to \bbR$, we associate it with a function $f_a:\cU\to \bbR$ defined by $f_a(x)=f(x+a)$ for all $x\in \cU$. This is a bijective correspondence between the set of functions from $\cU$ to $\bbR$ and the set of functions from $\cV$ to $\bbR$. Under this association, we can identify $\chi\in \widehat{\cV}$ with $\chi_a:\cU \to \{-1,1\}$ where $\chi_a(x)=\chi(x+a)$ for all $x\in \cU$. This defines an orthonormal basis  $\widehat{\cU}_a:= \{\chi_a: \cU\to \{-1,1\}\, | \,\chi\in \widehat{\cV}\}$ for the vector space of functions from $\cU$ to $\bbR$. We call this the Fourier basis for $\cU$ with respect to $a$. This basis depends on the choice of the shift $a\in \cU$. However, for all possible shifts $b\in \cU$ and character functions $\chi\in \widehat{\cV}$, the functions $\chi_a$ and $\chi_b$ only differ by a sign. To see this, observe that 
\[ \chi_a(x)=\chi(a+x)=\chi(b+x)\cdot \chi(a+b)=\chi_b(x)\cdot \chi(a+b) \]
We will sometimes ignore the subscript and simply use $\chi\in \widehat{\cV}$ to index functions in the Fourier basis of $\cU$. This is particularly the case when the properties we are dealing are independent of choice of basis (for example, the absolute values of Fourier coefficients of a function).

\subsection{Multi-Player Games}
In parallel repetition we often work with Cartesian product sets of the form
$(\cX_1 \times \cdots \times \cX_k)^n$.  For these sets, we will use subscripts to index the inner
product and superscripts to index the outer product.  That is, for $\cX=\cX_1\times \ldots \times \cX_k$ we view elements $x$ of $\cX^n$ as tuples $(x_1, \ldots, x_k)$, where $x_i\in \cX_i^n$. We use $x_i^j$ or $x_i(j)$ to refer to the $j^{th}$ coordinate of $x_i$. We use $x^j$ to denote the vector $(x_1^j, \ldots, x_k^j)$.  

If
$\{E_i \subseteq \cX_i\}_{i \in [k]}$ is a collection of subsets, we write $E_1 \times \cdots \times E_k$  
to denote the set $\{x \in \cX :\, \forall i \in [k], x_i \in E_i\}$.
We
say that $f :(\cX_1 \times \cdots \times \cX_k)^n\to (\cY_1 \times \cdots \times \cY_k)^n $ is a product function if $f=f_1 \times \cdots \times f_k$ for some functions $f_i:\cX_i^n\to \cY_i^n$.

\begin{definition}[Multi-player Games]
	A \textdef{$k$-player game} is a tuple $(\cX, \cY, Q, W)$, where
	$\cX = \cX_1 \times \cdots \times \cX_k$ and
	$\cY = \cY_1 \times \cdots \times \cY_k$ are finite sets, $Q$ is a probability
	measure on $\cX$, and $W : \cX \times \cY \to \{0,1\}$ is a ``winning'' predicate.  We refer to $Q$ as the \textdef{query distribution} or the \textdef{input distribution} of the game. 
\end{definition}

\begin{definition}[Deterministic Strategies]
	\label{def:deterministic-strategies}
	A \textdef{deterministic strategy} for a $k$-player game $\cG = (\cX, \cY, Q, W)$ is a function $f = f_1 \times \cdots \times f_k $ where each $f_i: \cX_i \to \cY_i$.  The \textdef{success probability} of $f$ in $\cG$ is denoted and defined as
	\[
	\val(\cG,f) \eqdef \Pr_{x \sim Q} \Big [ W \big (x, f(x)\big ) = 1 \Big ].
	\]
\end{definition}

% \begin{definition}
%   The \textdef{success probability} of a function
%   $f = f_1 \times \cdots f_k : \cX \to \cY$ in a $k$-player game $\cG = (\cX, \cY, Q,
%   W)$ is
%   \[
%     v[f](\cG) \eqdef \Pr_{x \sim Q} \Big [ W \big ((x, f(x)\big ) = 1 \Big ].
%   \]
% \end{definition}
The most important quantity associated with a game is the maximum probability with which the game can be ``won''.
\begin{definition}
	\label{def:game-value}
	The \textdef{value} of a $k$-player game
	$\cG = (\cX, \cY, Q, W)$, denoted $\val(\cG)$, is the maximum, over all 
	deterministic strategies $f$, of $\val(\cG,f)$.
\end{definition}

It is often easier to construct \emph{probabilistic} strategies for a
game, i.e. strategies in which players may use shared and/or
individual randomness in computing their answers.

\begin{definition}[Probabilistic Strategies]
	\label{def:probabilistic-strategies}
	Let $\cG = (\cX, \cY, Q, W)$ be a $k$-player game. A \textdef{probablistic strategy} for $\cG$ is a distribution $\cF$ of deterministic strategies for $\cG$.  The \textdef{success probability} of $\cF$ in $\cG$ is denoted and defined as
	\[
	\val(\cG,\cF) \eqdef \Pr_{\substack{x \sim Q \\ f \sim \cF}} \Big [ W \big (x, f(x)\big ) = 1 \Big ].
	\]
\end{definition}

A standard averaging argument implies that for every game,
probabilistic strategies cannot achieve better success probability
than deterministic strategies:
\begin{fact}
	\label{fact:randomized-value}
	Replacing ``deterministic strategies'' by ``probabilistic
	strategies'' in \cref{def:game-value} yields an equivalent
	definition.
\end{fact}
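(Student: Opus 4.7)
The plan is to prove the fact by showing the two natural inequalities between the deterministic value and the ``probabilistic value'' (i.e., the value defined by replacing deterministic strategies with probabilistic strategies in \cref{def:game-value}). The easy direction embeds deterministic strategies into probabilistic ones, and the averaging direction shows no probabilistic strategy beats the best deterministic one.

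For the first direction, I would observe that every deterministic strategy $f$ corresponds to the point-mass probabilistic strategy $\delta_f$ that samples $f$ with probability $1$. Then $\val(\cG, \delta_f) = \val(\cG, f)$ follows immediately from the definitions of $\val(\cG, \cF)$ and $\val(\cG, f)$. Taking the maximum over $f$ shows that the probabilistic value is at least $\val(\cG)$.

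For the reverse direction, I would apply a swap-of-expectations argument to any probabilistic strategy $\cF$:
\[
\val(\cG, \cF) \eqdef \Pr_{x \sim Q,\, f \sim \cF}\sbra{W(x, f(x)) = 1} = \E_{f \sim \cF}\sbra{\Pr_{x \sim Q}\sbra{W(x, f(x)) = 1}} = \E_{f \sim \cF}\sbra{\val(\cG, f)}.
\]
Since an expectation is bounded above by the maximum of the integrand, this is at most $\max_{f} \val(\cG, f) = \val(\cG)$, where the maximum is over all deterministic strategies. Hence $\val(\cG, \cF) \le \val(\cG)$ for every probabilistic $\cF$, completing the proof.

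There is no real obstacle here — the argument is a one-line application of linearity of expectation followed by ``average $\le$ max''. The only subtle bookkeeping point is ensuring that the $\max$ in \cref{def:game-value} is well-defined; this holds because $\cX$ and $\cY$ are finite, so there are only finitely many deterministic strategies and the maximum is attained. The swap of expectations itself is justified trivially (again by finiteness of $\cX$ and of the support of $\cF$, or by Fubini in general).
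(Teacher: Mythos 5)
Your proof is correct and is exactly the ``standard averaging argument'' that the paper invokes without writing out: one direction by viewing deterministic strategies as point masses, the other by swapping expectations and bounding an average by a maximum (with finiteness guaranteeing the maximum is attained).
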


The main operation on multi-player games that we consider in this
paper is parallel repetition:
\begin{definition}[Parallel Repetition]
	Given a $k$-player game $\cG = (\cX, \cY, Q, W)$, its \textdef{$n$-fold parallel
		repetition}, denoted $\cG^n$, is defined as the $k$-player game
	$(\cX^n, \cY^n, Q^n, W^n)$, where $W^n(x,y) \eqdef \bigwedge_{j=1}^{n} W(x^{j}, y^{j})$. For $x\in \cX^n$, we refer to $x_i\in \cX_i^n$ as the input to the $i$-th player.
	% In the above we write elements $x \in \cX'$ as $\left(\{x_1^{(i)}\}_{i \in [n]}, \ldots, \{x_k^{(i)}\}_{i \in [n]}\right)$,  we write
	% $x_j$ to denote $(x_j^{(1)}, \ldots, x_j^{(n)})$,
	% and we write
	% $x^{(i)}$ to denote $(x_1^{(i)}, \ldots, x_k^{(i)})$.
	% Our notation for components of elements of $\cY'$ is analogous.
\end{definition}

To bound the value of parallel repeated games, it is helpful to
analyze the probability of winning in a particular instance of the
game under various modified query distributions.
\begin{definition}[Value in $j^{th}$ coordinate]
	If $\cG = (\cX, \cY, Q, W^n)$ is a game (with a product winning predicate),
	the \textdef{value of $\cG$ in the $j^{th}$ coordinate} for $j\in[n]$, denoted $\val^{(j)}(\cG)$,
	is the value of the game $(\cX, \cY, Q, W')$, where $W'(x, y) = W(x^j, y^j)$.
\end{definition}

\begin{definition}[Game with Modified Query Distribution]
	Let $\cG = (\cX, \cY, Q, W)$ be a game. For a probability measure $P$ on $\cX$, we write
	$\cG | P$ to denote the game $(\cX, \cY, P, W)$.  For an event $E$ on $\cX$, we write
	$\cG | E$ to denote the game $(\cX, \cY, Q_E, W)$.
\end{definition}

\subsection{GHZ Distribution} 

Let $\cX=  \cX_1\times \cX_2\times \cX_3$ and $\cY=\cY_1\times \cY_2\times \cY_3$ where $\cX_i=\cY_i=\bbF_2$. Let $\cQ$ denote the uniform distribution over $\{(0,0,0),(0,1,1),(1,0,1),(1,1,0)\}$. Define $W:\cX\times \cY\to \{0,1\}$ at $x\in \cX,y\in \cY$ by $W(x,y)=1$ if and only if $x_1\lor x_2\lor x_3=y_1+ y_2+ y_3 \pmod{2}$. The $\GHZ$ game refers to the 3-player game $(\cX,\cY,\cQ,W)$, which has value $3/4$.   The $n$-fold repeated $\GHZ$ game refers to the $n$-fold parallel repetition of $(\cX,\cY,\cQ,W)$.  Our parallel repetition results easily generalize with any other (constant-sized) answer alphabet $\cY'$ and any predicate $W'$, as long as the game $(\cX, \cY', \cQ, W')$ has value less than $1$.

We typically use $X=(X_1,X_2,X_3)\in \cX^n$ to denote a random variable distributed according to $\cQ^n$ where $X_i\in \cX_i^n$ denotes the input to the $i$-th player.

\section{Partitioning into Pseudorandom Subspaces}
We make use of the notion of \textdef{affine partition} similar to the one defined in~\cite{HR20}.
We say that $\Pi$ is an \textdef{affine partition of $(\F_2^n)^3$ of codimension at most $d$} if $\Pi$ is a partition on $(\F_2^n)^3$ and:
\begin{itemize}
\item Each part $\pi\in\Pi$ has the form $a_\pi + \cV_\pi^3$ where
  $\cV_\pi$ is a subspace of $\F_2^n$ and $a_\pi\in(\F_2^n)^3$, and
\item Each $\cV_\pi$ has codimension at most $d$.
\end{itemize}
The main take-away from this section is \cref{cor:pseudorandom}, which states the following: Given the query distribution to the $n$-fold $\GHZ$ game, %\unote{changed slightly}%
and a product event $E\subseteq(\F_2^n)^3$ with large enough probability mass, we can find an affine partition $\Pi$ of $(\F_2^n)^3$ such that on a typical part $\pi\in\Pi$, the non-zero Fourier coefficients of the indicator functions $E_1|_{\pi_1},E_2|_{\pi_2},E_3|_{\pi_3}$ are small. Recall that $E_i|_{\pi_i}:\pi_i\to\{0,1\}$ is the indicator function of the set $E_i\cap\pi_i \subseteq \pi_i$. 

Formally, the proposition is as follows:

\begin{proposition}\label{cor:pseudorandom}
Let $\cP=\cQ^n$. Let $E=E_1\times E_2\times E_3 \subseteq (\F_2^n)^3$ be such that $\cP(E) = \alpha$. For all $\delta>0$, there exists an affine partition $\Pi$ of $(\bbF_2^n)^3$ of codimension at most $\frac{3}{\delta^3}$ such that the following holds. With probability at least $1-\frac{\delta}{\alpha}$ over $\pi\sim \Pi(\cP|E)$, for all $i\in[3]$  and non-zero $\chi \in \widehat{\cV}$, we have $ \abs{\widehat{E_i|_{ \pi_i}}(\chi)}\le   \delta$, where $\pi$ is of the form $\pi_1\times\pi_2\times\pi_3$ for affine shifts $\pi_1,\pi_2,\pi_3$ of some subspace $\cV$ of $\F_2^n$.

\end{proposition}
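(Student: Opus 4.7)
The plan is to build $\Pi$ by iteratively refining the trivial partition $\{(\F_2^n)^3\}$ via Fourier-based splits, capped at codimension $D := 3/\delta^3$, and to use a bounded potential function to control the $\cP$-mass of parts that reach the cap. Given a part $\pi = \pi_1 \times \pi_2 \times \pi_3$ of current codimension $< D$, whose components are affine shifts (by some $a_1, a_2, a_3$) of a common subspace $\cV \subseteq \F_2^n$, if there is an index $i \in [3]$ and a non-trivial character $\chi \in \widehat{\cV}$ with $|\widehat{E_i|_{\pi_i}}(\chi)| > \delta$, I replace $\pi$ by the $8$ subparts $\pi_1^{s_1} \times \pi_2^{s_2} \times \pi_3^{s_3}$ where $\pi_j^{\pm} := \{x \in \pi_j : \chi(x - a_j) = \pm 1\}$ are the two cosets of $\ker\chi$ inside $\pi_j$; each subpart is a product of three affine shifts of $\ker\chi$, whose codimension is exactly one more than $\cV$. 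I iterate until no part admits a refinement or its codimension hits $D$. The final partition $\Pi$ then has codimension at most $D$, and any leaf that stopped before reaching depth $D$ automatically satisfies the Fourier-smallness conclusion of the proposition.

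For the key quantitative bound, I use the potential $\Phi(\Pi) := \sum_{i=1}^3 \sum_{\pi \in \Pi} \cP(\pi)\, \mu_{\pi_i}(E_i)^2$, which lies in $[0, 3]$. The GHZ linear constraint $x + y + z = 0$ is the crucial input here: writing $(x,y,z) = (a_1 + u, a_2 + v, a_3 + w) \in \pi \cap \supp(\cP)$ with $u,v,w \in \cV$ and using that $\chi$ is a homomorphism, the constraint forces $\chi(u)\chi(v)\chi(w) = \chi(a_1 + a_2 + a_3)$, so exactly $4$ of the $8$ subparts intersect $\supp(\cP)$, and by the uniformity of $\cP$ on $\supp(\cP)$ each of these carries $\cP$-mass precisely $\cP(\pi)/4$. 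Combined with the elementary identity $\mu_{\pi_j^\pm}(E_j) = \mu_{\pi_j}(E_j) \pm \widehat{E_j|_{\pi_j}}(\chi)$, a direct calculation shows that the $i$-th contribution of $\pi$ to $\Phi$ changes from $\cP(\pi)\mu_{\pi_i}(E_i)^2$ to $\cP(\pi)\bigl(\mu_{\pi_i}(E_i)^2 + \widehat{E_i|_{\pi_i}}(\chi)^2\bigr)$, so a single refinement increases $\Phi$ by $\cP(\pi) \sum_{j=1}^3 \widehat{E_j|_{\pi_j}}(\chi)^2 \ge \cP(\pi)\delta^2$. In particular $\Phi$ is monotone non-decreasing.

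The final step is a tree-summation that converts these per-refinement increments into the desired probability bound. Summing the lower bound on each increment across all refinements and using $\Phi \le 3$, I get $\sum_{r \in R} \cP(r) \le 3/\delta^2$, where $R$ is the set of internal (refined) nodes of the refinement tree. Since $\cP(r)$ equals the sum of $\cP(\ell)$ over leaves $\ell$ in the subtree rooted at $r$, swapping the order of summation yields $\sum_{r \in R} \cP(r) = \sum_\ell \cP(\ell)\, d_\ell$, where $d_\ell$ is the depth of $\ell$ in the tree, which coincides with its codimension. Calling a leaf \emph{bad} if $d_\ell = D$, this gives $D \cdot \cP(\text{bad}) \le 3/\delta^2$, hence $\cP(\text{bad}) \le 3/(D\delta^2) = \delta$ and therefore $(\cP|E)(\text{bad}) \le \cP(\text{bad})/\cP(E) \le \delta/\alpha$. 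Every non-bad leaf $\pi$ has all non-trivial Fourier coefficients bounded by $\delta$ by the termination condition, completing the proof.

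The main obstacle is the potential-increase calculation in the second paragraph, specifically the claim that exactly $4$ of the $8$ subparts carry $\cP$-mass and that they do so with mass equal to $\cP(\pi)/4$; this is the only place where the linearity of the GHZ constraint genuinely enters the argument, and an analogous claim would fail for a nonlinear three-player constraint. Everything else --- the uniform bound on $\Phi$, the tree summation, and the passage from $\cP$ to $\cP|E$ --- is routine bookkeeping once that key identity is in hand.
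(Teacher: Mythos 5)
Your proof is correct and follows essentially the same approach as the paper: the same potential function $\Phi = \sum_i \E_{\pi\sim\Pi(\cP)}[\mu_{\pi_i}(E_i)^2]$ bounded in $[0,3]$, the same Fourier-based split into eight sign-subparts, the same crucial observation that the GHZ linear constraint kills exactly four of the eight subparts and distributes $\cP$-mass equally among the remaining four, and the same per-refinement increment $\cP(\pi)\sum_j\widehat{E_j|_{\pi_j}}(\chi)^2$. The only difference is bookkeeping: the paper refines \emph{all} parts simultaneously at each step (so every part in $\Pi_t$ has codimension $t$) and argues that each refinement \emph{round} raises $\Phi$ by at least $\delta\cdot\delta^2=\delta^3$ because it is only triggered when a $\delta$-fraction of parts have a large coefficient, yielding at most $3/\delta^3$ rounds; you instead refine parts individually, cap the depth at $D=3/\delta^3$, and use a tree-summation $\sum_{r\in R}\cP(r)=\sum_\ell \cP(\ell)d_\ell$ to bound the $\cP$-mass of the depth-$D$ leaves by $\delta$. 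Both accountings are routine once the potential-increase identity is established, and both give the stated codimension bound and $\delta/\alpha$ failure probability, so this is the same proof with a mildly different termination argument.
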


Recall that $\Pi(\cP|E)$ is the distribution induced by sampling $x\sim \cP|E$ and outputting the part of $\Pi$ to which $x$ belongs. %\unote{Changed slightly.}
%Recall that $\Pi(\cP|E)$ is the random part of $\Pi$ where $x\sim \cP|E$ belongs
Note that in the statement of the proposition, we don't specify a choice of Fourier basis for $\pi_i$. This is because for any set $S\subseteq \pi_i$, the quantity $\abs{\widehat{S}(\chi_{a_i})}$ is independent of choice of $a_i\in \pi_i$ so we simply write $\abs{\widehat{S}(\chi)}$. The proof of \cref{cor:pseudorandom} is similar in nature to the proof of Lemma 6.2 in \cite{HR20}, but is much simpler and is deferred to the \cref{app:pseudorandom}.

\section{Key Fourier Analytic Lemmas}
	 
We crucially make use of the following lemma. %\wnote{might be better to recall what is $\cQ$}

\begin{lemma}\label{lem:fourierlemmaedge}
	Let $\cV\subseteq \bbF_2^n$ be a subspace and $a_1,a_2,a_3\in \bbF_2^n$ be such that $a_1 + a_2 + a_3 = 0$. Let $\pi=\pi_1\times \pi_2\times \pi_3$ where $\pi_i=a_i+\cV$. Let $A\subseteq \pi_1,B\subseteq\pi_2,C\subseteq \pi_3$ be sets such that for all non-zero $\chi\in \widehat{\cV}$, we have $   \abs{\widehat{C}(\chi)}\le \delta_1$. Then,
	\[ \abs{ \underset{\substack{z\sim {\pi_3}\\x\sim \pi_1}}{\E}[A(x)\cdot B(x+z)\cdot C(z)] -  \mu_{\pi_1}(A)\cdot \mu_{\pi_2}(B)\cdot \mu_{\pi_3}(C)} \le \delta_1 . \]
	If furthermore for all non-zero $\chi\in \widehat{\cV}$, we have $\abs{\widehat{B}(\chi)} \le \delta_2  $, then 
	\[ \abs{ \E_{z\sim {\pi_3}}\sbra{\pbra{ \E_{x\sim \pi_1}[A(x)\cdot B(x+z)]}^2\cdot C(z)} -  \mu_{\pi_1}(A)^2\cdot \mu_{\pi_2}(B)^2\cdot \mu_{\pi_3}(C)} \le\delta_2^2+ \delta_1 . \]

\end{lemma}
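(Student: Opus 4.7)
}

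The plan is to translate everything to functions on the linear subspace $\cV$, Fourier expand, and then use orthogonality of characters together with the hypothesized bounds on the nontrivial Fourier coefficients of $B$ and $C$. For convenience, define $\tilde{A},\tilde{B},\tilde{C}:\cV\to\{0,1\}$ by $\tilde{A}(v)=A(a_1+v)$, $\tilde{B}(v)=B(a_2+v)$, $\tilde{C}(v)=C(a_3+v)$. Since $a_1+a_2+a_3=0$, the change of variables $x=a_1+u$, $z=a_3+w$ (with $u,w\sim\cV$) gives $x+z = a_2 + u+w$, so the first expectation equals $\E_{u,w\sim\cV}[\tilde A(u)\tilde B(u+w)\tilde C(w)]$. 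Observe also that $\widehat{\tilde C}(\chi)$ agrees with $\widehat{C}(\chi_{a_3})$ up to a global sign, so the hypothesis on $C$ translates to $|\widehat{\tilde C}(\chi)|\le\delta_1$ for all nonzero $\chi\in\widehat{\cV}$, and similarly for $B$.

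For the first inequality, I would Fourier expand all three functions and compute
\[
\E_{u,w\sim\cV}\!\bigl[\tilde A(u)\tilde B(u+w)\tilde C(w)\bigr]
=\sum_{\chi_1,\chi_2,\chi_3}\widehat{\tilde A}(\chi_1)\widehat{\tilde B}(\chi_2)\widehat{\tilde C}(\chi_3)\,\E_u[\chi_1\chi_2(u)]\,\E_w[\chi_2\chi_3(w)],
\]
which by orthogonality collapses to $\sum_{\chi}\widehat{\tilde A}(\chi)\widehat{\tilde B}(\chi)\widehat{\tilde C}(\chi)$. The $\chi=\emptyset$ term is exactly $\mu_{\pi_1}(A)\mu_{\pi_2}(B)\mu_{\pi_3}(C)$. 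The remaining error is bounded by $\delta_1\sum_{\chi\ne\emptyset}|\widehat{\tilde A}(\chi)\widehat{\tilde B}(\chi)|$, and by Cauchy--Schwarz followed by Parseval this is at most $\delta_1\sqrt{\mu_{\pi_1}(A)\mu_{\pi_2}(B)}\le\delta_1$.

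For the second inequality, let $\tilde h(w)=\E_u[\tilde A(u)\tilde B(u+w)]$; a one-line Fourier computation as above shows $\widehat{\tilde h}(\chi)=\widehat{\tilde A}(\chi)\widehat{\tilde B}(\chi)$. Then I would expand $\tilde h(w)^2$ as a double sum over pairs $(\chi,\chi')$ and use Plancherel to write
\[
\E_{w\sim\cV}[\tilde h(w)^2\tilde C(w)]=\sum_{\chi,\chi'}\widehat{\tilde A}(\chi)\widehat{\tilde B}(\chi)\widehat{\tilde A}(\chi')\widehat{\tilde B}(\chi')\,\widehat{\tilde C}(\chi+\chi').
\]
The $(\chi,\chi')=(\emptyset,\emptyset)$ term is the desired main term $\mu_{\pi_1}(A)^2\mu_{\pi_2}(B)^2\mu_{\pi_3}(C)$. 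I would then split the remaining error along the dichotomy $\chi+\chi'=\emptyset$ versus $\chi+\chi'\ne\emptyset$. For the diagonal $\chi=\chi'\ne\emptyset$, the contribution is $\mu_{\pi_3}(C)\sum_{\chi\ne\emptyset}\widehat{\tilde A}(\chi)^2\widehat{\tilde B}(\chi)^2$, which is at most $\delta_2^2\sum_{\chi}\widehat{\tilde A}(\chi)^2=\delta_2^2\mu_{\pi_1}(A)\le\delta_2^2$ by applying the bound $|\widehat{\tilde B}(\chi)|\le\delta_2$ and then Parseval on $\tilde A$. For the off-diagonal $\chi\ne\chi'$, applying $|\widehat{\tilde C}(\chi+\chi')|\le\delta_1$ and then Cauchy--Schwarz/Parseval as in part~1 gives a bound of $\delta_1(\sum_\chi|\widehat{\tilde A}(\chi)\widehat{\tilde B}(\chi)|)^2\le\delta_1\mu_{\pi_1}(A)\mu_{\pi_2}(B)\le\delta_1$. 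Summing gives $\delta_2^2+\delta_1$.

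The proof is essentially bookkeeping once the Fourier setup is in place; the only place where care is needed is the split of the off-diagonal terms in part~2, where one must notice that the slot where $\chi+\chi'$ appears is exactly $\widehat{\tilde C}$, so that the $C$-hypothesis (not the $B$-hypothesis) controls this term, while the diagonal $\chi=\chi'\ne\emptyset$ is the one that forces the use of the stronger $B$-hypothesis. This is the main subtlety; once it is identified, everything else is a direct application of Parseval and Cauchy--Schwarz.
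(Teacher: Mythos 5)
Your proposal is correct and follows essentially the same route as the paper's proof: Fourier expansion, orthogonality collapsing to $\sum_\chi \widehat{A}(\chi)\widehat{B}(\chi)\widehat{C}(\chi)$ for the first bound, and for the second the identical split into the main term, the diagonal $\chi=\chi'\neq\emptyset$ terms (controlled by $\delta_2^2$ via Parseval on $A$) and the off-diagonal terms (controlled by $\delta_1$ via Cauchy--Schwarz and Parseval). The only cosmetic difference is that you translate everything to functions on $\cV$ via an explicit change of variables, whereas the paper works directly with the shifted characters $\chi_{a_i}$ on the affine parts; the estimates are the same.
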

Recall from \Cref{sec:measure} that $\mu_{\pi_i}(S)\triangleq \tfrac{|S\cap \pi_i|}{|\pi_i|}$. In the statement of this lemma, we don't specify a choice of Fourier basis for $\pi_2$ and $\pi_3$.  Since the properties $\abs{\widehat{C}(\chi_{a_3})}\le \delta_1$ and $\abs{\widehat{B}(\chi_{a_2})}\le \delta_2$ are independent of the choice of $a_2$ and $a_3$, we simply write $\abs{\widehat{C}(\chi)}\le \delta_1$ and $\abs{\widehat{B}(\chi)}\le \delta_2$.

 \begin{proof}[Proof of \cref{lem:fourierlemmaedge}] 
 	 We expand the indicator functions of the sets $A,B,C$ in their Fourier basis with respect to $a_1,a_2,a_3$ as follows.
 	 \begin{align*}
 	  &\E_{\substack{z\sim {\pi_3}\\x\sim \pi_1}}[A(x)\cdot B(x+z)\cdot C(z)]  \\
 	  &= \sum_{\chi,\chi',\chi''\in \widehat{\cV}} \E_{\substack{z\sim {\pi_3}\\x\sim \pi_1}}\sbra{\widehat{A}(\chi_{a_1})\cdot \chi_{a_1}(x)\cdot  \widehat{B}(\chi'_{a_2})\cdot \chi'_{a_2}(x+z)\cdot  \widehat{C}(\chi''_{a_3})\cdot \chi''_{a_3}(z)} \\
 	  &= \sum_{\chi,\chi',\chi''\in \widehat{\cV}} \E_{\substack{z\sim {\pi_3}\\x\sim \pi_1}}\sbra{\widehat{A}(\chi_{a_1})\cdot\widehat{B}(\chi'_{a_2})\cdot \widehat{C}(\chi''_{a_3})\cdot \chi(x+a_1)\cdot   \chi'(x+z+a_2)\cdot   \chi''(z+a_3) }\\
 	  &= \sum_{\chi,\chi',\chi''\in \widehat{\cV}} \E_{x',z'\sim \cV}\sbra{\widehat{A}(\chi_{a_1})\cdot\widehat{B}(\chi'_{a_2})\cdot \widehat{C}(\chi''_{a_3})\cdot \chi(x')\cdot   \chi'(x'+z')\cdot   \chi''(z') }
 	 \end{align*}
   The second equality follows from definition. The third follows from the fact that when we vary over $x\sim a_1+\cV$ and $z\sim a_3+\cV$, the distribution of $(x+a_1,x+z+a_2,z+a_3)$ is the uniform distribution over $\cV^3\cap \{(x,x+z,z)\, | \, x,z\in \F_2^n\}$. Since the Fourier characters form an orthonormal basis, when we take an expectation over $x',z'\sim \cV$, the only terms that survive correspond to $\chi=\chi'=\chi''$. This implies that
    \begin{align*}
   	\E_{\substack{z\sim {\pi_3}\\x\sim \pi_1}}[A(x)\cdot B(x+z)\cdot C(z)]  &= \sum_{\chi\in \widehat{\cV}} \sbra{\widehat{A}(\chi_{a_1})\cdot\widehat{B}(\chi_{a_2})\cdot \widehat{C}(\chi_{a_3}) }\\
   	&=\mu_{\pi_1}(A)\cdot\mu_{\pi_2}(B)\cdot\mu_{\pi_3}(C)+\sum_{\emptyset\neq \chi\in \widehat{\cV}} \sbra{\widehat{A}(\chi_{a_1})\cdot\widehat{B}(\chi_{a_2})\cdot \widehat{C}(\chi_{a_3}) }.
   \end{align*}
The last equality is because $\mu_{\pi_i}(S)=\widehat{S}(\emptyset)$ for any $S\subseteq\pi_i$.  Thus, we have
\[  \abs{ \E_{\substack{z\sim {\pi_3}\\x\sim \pi_1}}[A(x)\cdot B(x+z)\cdot C(z)] -  \mu_{\pi_1}(A)\mu_{\pi_2}(B)\mu_{\pi_3}(C)}\le  \sum_{\chi\in\widehat{\cV}\setminus \emptyset} \abs{\widehat{A}(\chi_{a_1})\cdot \widehat{B}(\chi_{a_2})\cdot \widehat{C}(\chi_{a_3})}. \]
Since all Fourier bases differ only up to a sign, the R.H.S. of the above is independent of choice of $a_1,a_2,a_3$. Henceforth, we omit the subscript. Thus, 
 	\begin{align*}
 		  &\abs{ \E_{\substack{z\sim {\pi_3}\\x\sim \pi_1}}[A(x)\cdot B(x+z)\cdot C(z)] -  \mu_{\pi_1}(A)\mu_{\pi_2}(B)\mu_{\pi_3}(C)}\\
 		  &\le  \sum_{\chi\in\widehat{\cV}\setminus \emptyset} \abs{\widehat{A}(\chi)}\cdot \abs{\widehat{B}(\chi)}\cdot \abs{\widehat{C}(\chi)}\\
 		&\le \sum_{\chi\in\widehat{\cV}\setminus \emptyset} \abs{\widehat{A}(\chi)\cdot \widehat{B}(\chi)}\cdot \delta_1\\
 		&\le \sqrt{ \sum_{\chi\in\widehat{\cV}\setminus\emptyset} \widehat{A}(\chi)^2}\cdot  \sqrt{ \sum_{\chi\in\widehat{\cV}\setminus\emptyset} \widehat{B}(\chi)^2}\cdot \delta_1\le \delta_1.
 	\end{align*}
 	The second-last inequality is due to the Cauchy-Schwarz Inequality. The last inequality follows from the fact that $B$ and $A$ are $\{0,1\}$-indicator functions and by Parseval's theorem, their sum of squares of Fourier coefficients is at most 1. This completes the proof of the first statement in \Cref{lem:fourierlemmaedge}.
 	
 	For the second statement, we again expand the indicator functions of the sets $A,B,C$ in the Fourier basis with respect to $a_1,a_2,a_3$ as follows.  
 	\begin{align*} \E_{x\sim \pi_1}\sbra{A(x)\cdot B(x+z)}&=\sum_{\chi,\chi'\in \widehat{\cV}} \E_{x\sim \pi_1}\sbra{ \widehat{A}(\chi_{a_1})\cdot \widehat{B}(\chi'_{a_2})\cdot \chi_{a_1}(x)\cdot\chi'_{a_2}(x+z)}\\
 	&=\sum_{\chi,\chi'\in \widehat{\cV}} \E_{x\sim \pi_1}\sbra{ \widehat{A}(\chi_{a_1})\cdot \widehat{B}(\chi'_{a_2})\cdot \chi(a_1+x)\cdot\chi'(a_1+a_3+x+z)}\\
 	&=\sum_{\chi,\chi'\in \widehat{\cV}} \E_{x\sim \pi_1}\sbra{ \widehat{A}(\chi_{a_1})\cdot \widehat{B}(\chi'_{a_2})\cdot \chi(a_1+x)\cdot\chi'(a_1+x)\cdot \chi'(a_3+z)}
 	\end{align*}
Here, we used the facts that $a_1+a_2+a_3=0$ and that $\chi'$ is a group homomorphism. As we vary $x\sim a_1+\cV$, the distribution of $a_1+x$ is uniform over $\cV$. Since the Fourier characters are orthonormal, the only terms that survive in the above expression correspond to $\chi=\chi'$. Thus,
 	\begin{align*} \E_{x\sim \pi_1}\sbra{A(x)\cdot B(x+z)}
	&=\sum_{\chi\in \widehat{\cV}}   \widehat{A}(\chi_{a_1})\cdot \widehat{B}(\chi_{a_2}) \cdot \chi(a_3+z).
\end{align*}
We now consider: 
\begin{align*} 
	(*)&:=\E_{z\sim \pi_3}\sbra{ \pbra{\E_{x\sim \pi_1}\sbra{A(x)\cdot B(x+z)}}^2\cdot C(z)}\\ &=	\E_{z\sim \pi_3}\sbra{  \pbra{\sum_{\chi\in \widehat{\cV}}   \widehat{A}(\chi_{a_1})\cdot \widehat{B}(\chi_{a_2}) \cdot \chi(a_3+z)}^2 \cdot C(z)}\\
	&=	\E_{z\sim \pi_3}\sbra{  \sum_{\chi,\chi',\chi''\in \widehat{\cV}}   \widehat{A}(\chi_{a_1})\cdot \widehat{B}(\chi_{a_2}) \cdot \widehat{A}(\chi'_{a_1})\cdot \widehat{B}(\chi'_{a_2}) \cdot \widehat{C}(\chi''_{a_3}) \cdot \chi(a_3+z)\cdot \chi'(a_3+z) \cdot \chi''(a_3+z)}\\
\end{align*}
Recall that there is a canonical group isomorphism between $\cV$ and $\widehat{\cV}$. Under this isomorphism, the $\chi+\chi'+\chi''\in \widehat{\cV}$ satisfies $(\chi+\chi'+\chi'')(a_3+z)=\chi(a_3+z)\cdot \chi'(a_3+z) \cdot \chi''(a_3+z)$ for all $z\in a_3+\cV$. Since the characters form an orthonormal basis, the only terms that survive in the above expression correspond to $\chi+\chi'+\chi''=0.$ This implies that

\[
	(*)=  \sum_{\chi,\chi'\in \widehat{\cV}}   \widehat{A}(\chi_{a_1})\cdot \widehat{B}(\chi_{a_2}) \cdot \widehat{A}(\chi'_{a_1})\cdot \widehat{B}(\chi'_{a_2}) \cdot \widehat{C}((\chi+\chi')_{a_3})\]
Recall that $\widehat{S}(\emptyset)=\mu_{\pi_i}(S)$ for any set $S\subseteq \pi_i$ and $i\in[3]$. Thus,
 	\begin{align*}
 	&\abs{ \E_{z\sim {\pi_3}}\sbra{\pbra{ \E_{x\sim \pi_1}[A(x)\cdot B(x+z)]}^2\cdot C(z)} -  \mu_{\pi_1}(A)^2\cdot \mu_{\pi_2}(B)^2\cdot \mu_{\pi_3}(C)} \\
	&\le \sum_{\emptyset\neq \chi\in\widehat{\cV}} \widehat{A}(\chi)^2\cdot \widehat{B}(\chi)^2\cdot \abs{\widehat{C}(\emptyset)} + \sum_{\chi\neq \chi'\in \widehat{\cV}} \abs{\widehat{A}(\chi)\cdot\widehat{A}(\chi')\cdot \widehat{B}(\chi)\cdot\widehat{B}(\chi')\cdot \widehat{C}(\chi+\chi')}.
\end{align*} 
 We omitted the subscripts on Fourier characters, as the R.H.S. of the above is independent of choice of Fourier basis.
 We now bound the first term by $\delta_2^2$ as follows. We bound $\abs{\widehat{C}(\emptyset)}$ by 1. The given assumption that  $\abs{\widehat{B}(\chi)} \le \delta_2$ for all $\chi\in \widehat{\cV}\setminus\emptyset$ implies that
\[ \sum_{\emptyset\neq \chi\in\widehat{\cV}} \widehat{A}(\chi)^2 \cdot \widehat{B}(\chi)^2 \cdot\abs{ \widehat{C}(\emptyset)} \le \sum_{\emptyset\neq \chi\in\widehat{\cV}}   \widehat{A}(\chi)^2 \cdot{\delta_2}^2\le {\delta_2}^2. \]
We bound the second term as follows.
\begin{align*}
\sum_{\chi\neq \chi'\in \widehat{\cV}}\abs{ \widehat{A}(\chi) \cdot \widehat{A}(\chi') \cdot \widehat{B} (\chi) \cdot \widehat{B} (\chi') \cdot \widehat{C} (\chi+\chi')}&\le \sum_{\chi, \chi'\in \widehat{\cV}} \abs{\widehat{A}(\chi) \cdot \widehat{A} (\chi') \cdot  \widehat{B} (\chi) \cdot \widehat{B} (\chi')}\cdot  {\delta_1}\\
&= \pbra{ \sum_{\chi\in \widehat{\cV}} \abs{\widehat{A} (\chi) \cdot \widehat{B} (\chi)}}^2\cdot  {\delta_1}\\
&\le  {\delta_1}.
\end{align*}
The last inequality follows from Cauchy-Schwartz and Parseval as in the previous bound. This completes the proof of \cref{lem:fourierlemmaedge}.
\end{proof}

\section{Main Proof}
	 
We use the following Parallel Repetition Criterion which is similar to, but weaker than the one from~\cite{HR20} for the $\GHZ$ game and has a slightly simpler proof. 

Let $\cG$ refer to the $n$-fold parallel repetition of the $\GHZ$ game. Let $\cP=\cQ^n$.

\begin{lemma}[Parallel Repetition Criterion]\label{lem:parallel_repetition_criteria}  Let $c\in(0,1]$ be a constant and $\rho(n):\bbN\to \bbR$ be a function such that $\rho(n)\ge \exp(-n)$. Suppose for all large $n\in \bbN$ and all subsets $E_1, E_2, E_3\subseteq \bbF_2^n$ such that $\cP(E)\ge \rho(n)$ where  $E=E_1\times E_2\times E_3$, we have $\E_{i\sim [n]}\sbra{\val^{(i)}(\cG|E)}\le 1-c$. Then, 
\[\val(\cG)\le \rho(n)^{\Omega(1)}. \]
\end{lemma}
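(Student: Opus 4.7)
The plan is to prove the contrapositive. Suppose for contradiction that $v := \val(\cG) > \rho(n)^{c'}$ for a sufficiently small constant $c' = c'(c) > 0$. Fix an optimal deterministic strategy $f = f_1 \times f_2 \times f_3$, and let $W = \bigcap_{i=1}^{n} W_i$ denote the event that $f$ wins in every coordinate, so $\cP(W) = v$. The goal is to exhibit a product event $E^{\star}$ with $\cP(E^{\star}) \ge \rho$ and $\E_i\bigl[\cP(W_i \mid E^{\star})\bigr] > 1 - c$, contradicting the hypothesis.

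The main idea is an iterative construction of a nested chain of product events
\[
  \cX^n = E^{(0)} \supseteq E^{(1)} \supseteq \cdots \supseteq E^{(T)},
\]
maintaining $\cP(E^{(t)}) \ge \rho$ throughout, such that the conditional win probability $\beta_t := \cP(W \mid E^{(t)})$ grows geometrically: $\beta_t \ge (1+\delta) \beta_{t-1}$ for some constant $\delta = \delta(c) > 0$. Once $\beta_T > 1 - c/2$, the inclusion $W \subseteq W_i$ for every $i$ gives $\E_i[\cP(W_i \mid E^{(T)})] \ge \beta_T > 1 - c/2$, which together with the hypothesis yields the contradiction. Taking $T = \Theta(\log(1/v)/\delta)$ suffices to reach $\beta_T > 1 - c/2$; since each step will shrink measure by at most a factor of $2$, the budget $T \le \log_2(1/\rho)$ is available, so choosing $c' < \delta/\log 2$ keeps $\cP(E^{(T)}) \ge \rho$ throughout.

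The inductive refinement step is the heart of the argument. Given $E^{(t-1)} = E_1 \times E_2 \times E_3$ with $\beta_{t-1} < 1 - c/2$, consider the three per-player conditional densities
\[
  g_j(x_j) \eqdef \cP\bigl(W \mid X_j = x_j,\, E^{(t-1)}\bigr), \qquad j \in \{1,2,3\},
\]
each of mean $\beta_{t-1}$. I aim to show that at least one $g_j$ has variance $\ge \delta^2 \beta_{t-1}^2$; restricting the $j$-th player's input to the upper half of $g_j$'s values preserves the product structure (since $g_j$ depends only on $x_j$) and yields $E^{(t)}$ with $\cP(E^{(t)}) \ge \tfrac{1}{2}\cP(E^{(t-1)})$ and $\beta_t \ge \beta_{t-1}(1 + \delta)$, via a standard second-moment computation.

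The hard part is the variance lower bound: ruling out the degenerate case in which all three $g_j$'s are essentially constant. In that case the joint distribution $\cP \mid W \cap E^{(t-1)}$ agrees with $\cP \mid E^{(t-1)}$ on every one-player marginal, yet the winning probability $\beta_{t-1}$ is bounded below $1$. I expect to handle this case by a direct application of the hypothesis to $E^{(t-1)}$ itself, combined with the averaged bound $\E_i[\cP(W_i \mid E^{(t-1)})] \le 1-c$ and the inequality $\cP(W \mid E^{(t-1)}) \le \min_i \cP(W_i \mid E^{(t-1)})$: near-constancy of all $g_j$ forces $\beta_{t-1}$ to be close to the marginal winning rates in every coordinate, and the hypothesis then already provides the inverse-polynomial bound on $v$, allowing the iteration to terminate at this step.
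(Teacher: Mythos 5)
Your approach is genuinely different from the paper's, and unfortunately it has a gap at its core.

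The paper bounds the success probability of a fixed strategy $\bar f$ directly as a product of conditional terms. It picks coordinates $j_1,\dots,j_m$ with $m=\Theta(\log(1/\rho))$ one at a time, writes $\val(\cG,\bar f)\le\prod_{i=0}^{m-1}\Pr[W_{j_{i+1}}\mid W_{j_1}\cap\cdots\cap W_{j_i}]$, and then conditions further on the exact query--answer values $Z_{\le i}=z_{\le i}$ on the already-chosen coordinates. The crucial observation is that $\{Z_{\le i}=z_{\le i}\}$ is a product event (each player's queries and answers on those coordinates are functions of her own input), and that a typical such $z_{\le i}$ yields a product event of mass $\ge\rho$. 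The hypothesis, applied to that product event, supplies a fresh coordinate where the conditional value drops by $c/2$. Nothing about ``boosting'' $\Pr[W\mid E]$ ever appears.

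Your plan, by contrast, tries to iteratively \emph{increase} the conditional probability $\beta_t=\cP(W\mid E^{(t)})$ of winning all $n$ coordinates until it exceeds $1-c/2$. The mechanism --- arguing that some per-player density $g_j(x_j)=\cP(W\mid X_j=x_j,E)$ has $\Var(g_j)\gtrsim\beta^2$, then restricting player $j$ to the upper half of $g_j$ --- is the heart of the proof, and it fails. All three $g_j$ can be \emph{exactly} constant while $\beta$ is bounded away from $1$. Concretely, over the GHZ query distribution (uniform on $z=x+y$) take $W=\{(x,y,z):x_1+y_2=0\}$: then $g_1\equiv g_2\equiv g_3\equiv 1/2$, yet $\cP(W)=1/2$. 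Nothing in the structure of $W=\bigcap_i W_i$ arising from a strategy $f$ rules out such ``balanced'' events, and your argument must work for the optimal $f$, whatever it is. The fallback you sketch for the degenerate case is also not sound: near-constancy of the $g_j$ says nothing about the per-coordinate win rates $\cP(W_i\mid E)$, and in general $\cP(\bigcap_i W_i\mid E)$ can be exponentially smaller than every $\cP(W_i\mid E)$, so one cannot conclude that $\beta_{t-1}$ is close to $\min_i\cP(W_i\mid E)$ or that the hypothesis already ``terminates'' the iteration. Since the inductive step is both the main novelty and the main gap, the proof as proposed does not go through; I'd encourage you to look at the paper's conditioning on $Z_{\le i}=z_{\le i}$, which is the standard way to obtain product events with small enough mass to accumulate $\Theta(\log(1/\rho))$ geometric losses.
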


This lemma is proved in~\cite{HR20} under the weaker assumption that there is {\it some} coordinate $i\in[n]$ for which $\val^{(i)}(\cG|E)\le 1-c$. The proof is slightly simpler under our stronger assumption that $\E_{i\sim [n]}\sbra{\val^{(i)}(\cG|E)}\le 1-c$. We prove this in \Cref{proof:parallel_repetition_criteria}. 

Given this criterion, our goal of showing an inverse polynomial bound for $\val(\cG)$ reduces to showing the following. Let $E=E_1\times E_2\times E_3$ be any event such that $\cP(E)= \alpha\ge \frac{1}{n^{1/100}}$ and $n$ be large enough. It suffices to show that $\E_{i\sim [n]}\sbra{\val^{(i)}(\cG|E)} \le 0.95.$ We do this as follows.
	 
Let $\delta=  \tfrac{\alpha^{20}}{n^{1/40}}$. \Cref{cor:pseudorandom} implies the existence of a partition $\Pi$ of $(\bbF_2^n)^3$ into affine subspaces of codimension at most $O\pbra{ \tfrac{1}{\delta^3}}= o(n)$ such that: 
\begin{itemize}
\item Every $\pi\in \Pi$ is of the form $a+\cV^3$ where $\cV\subseteq\bbF_2^n$ is a subspace and $a\in(\F_2^n)^3$.
    \item With probability at least $1-\frac{ \delta}{\cP(E)} \ge 1 - o(1)$ over $\pi\sim \Pi(\cP|E)$, we have $\abs{\widehat{E_i|_{\pi_i}}(\chi)}
\le \delta$ for all $i\in [3]$ and non-zero $ \chi\in \widehat{\cV}$, where $\cV$ is the subspace of $\F_2^n$ for which $\pi$ is an affine shift of $\cV^3$.  
\end{itemize}

 Under the distribution $\Pi(\cP|E)$, the probability that $\pi$ is sampled equals  $\tfrac{(\cP|\pi)(E)\cdot \cP(\pi)}{\cP(E)}$ by Bayes' rule. This implies that the probability that $\pi\sim \Pi(\cP|E)$ satisfies $(\cP|\pi)(E)\le \cP(E)/10$ is at most $1/10$. 
 We will focus on $\pi=\pi_1\times \pi_2\times \pi_3$ that satisfy both these properties, namely, the measure of $E$ under $\cP|\pi$ is significant, furthermore, for all $i\in[3]$, all non-zero Fourier coefficients of the sets $E_i$ restricted to $\pi_i$ are small.
 
 \begin{definition}
     We say that $\pi$ is \textdef{good} if
    \begin{equation}\label{eq:temp3} (\cP|\pi)(E)\ge \alpha/10, \text{ and for all non-zero } \chi\in \widehat{\cV}\text{ and } i\in[3], \text{ we have } \abs{\widehat{E_i|_{\pi_i}}(\chi)}
 \le \delta.
    \end{equation}
 \end{definition}
 By a union bound, a random $\pi \sim \Pi(\cP | E)$ will be good with probability at least $1-\frac{1}{10} - \frac{\delta}{\alpha}$. 
 Fix any such good $\pi = \pi_1 \times \pi_2 \times \pi_3 \in \Pi$, and let $\cV$ be the subspace such that $\pi$ is an affine shift of $\cV^3$.  %Note that $\pi\cap \supp(\cP)$ is non-empty since $\pi$ is sampled from $\Pi(\cP|E)$. Therefore, we may choose $a\in \supp(\cP)$ so that $\pi=a+\cV^3$. Also note that $|\pi_1|=|\pi_2|=|\pi_3|=|\cV|.$ 

For all $z\in E_3\cap \pi_3$, define a (partial) matching $M_{z}$ between $\pi_1$ and $\pi_2$ as follows. For $x\in\pi_1\cap E_1,y\in \pi_2\cap E_2,z\in \pi_3\cap E_3$ such that $x+y=z$, put an edge $(x,y)$. Let $L_{z}$ (resp. $R_{z}$) be the left (resp. right) endpoints of $M_{z}$. Let ${G}=\cup_{z\in E_3\cap \pi_3} M_{z}$ be the bipartite graph between $\pi_1$ and $\pi_2$ obtained by combining edges from the matchings for $z\in E_3\cap \pi_3$. Let $E({G})$ denote the set of edges in ${G}$.  For every edge $e\in E(G)$, we can identify $e$ with a valid input to the $n$-fold $\GHZ$ game that is contained in $E\cap \pi$. Namely, we associate $(x_0,y_0)\in E(G)$ to the input $(x_0,y_0,x_0+y_0)\in \supp(\cP)\cap E\cap \pi$. This is a bijective correspondence because of the way we defined the graph $G$. Under this correspondence, the uniform distribution over edges of $G$ corresponds to the distribution $\cP|E,\pi$. We now introduce the important notion of a bow tie. 
 
 \begin{definition}[Bow Tie]
 We say that a subset of edges $b\subseteq E(G)$ is a \textdef{bow tie} if $b=\{x_0,x_1\}\times \{y_0,y_1\}$ for some $x_0\neq x_1\in \pi_1$, $y_0\neq y_1\in \pi_2$ such that $x_0+y_0=x_1+y_1$ (or equivalently $x_0+y_1=x_1+y_0$). Alternatively, for $z_0=x_0+y_0$ and $z_1=x_0+y_1$, we have $(x_i,y_j,z_k)\in \supp(\cP)$ for all $(i,j,k)\in \supp(\cQ)$.
%  \knote{changed from $(x_i,y_j,z_k)\in \supp(\cP)$ for all $(i,j,k)\in \supp(\cQ)$}\unote{Hm, why? The earlier statement is correct right?} \knote{Yeah sorry corrected. But $i,j,k$ should be 0/1's right?}\unote{They should only be in the support of $\GHZ$ right? }
 
 Let $b=\{x_0,x_1\}\times \{y_0,y_1\}$  be a bow tie.
%  \knote{What is the difference between a bow tie and edges of a bow tie?}\unote{The bowtie is the {\it set} of edges. We probably don't need to define edges OF the bow tie separately, do we? Doesn't hurt though.}
%  \knote{Why isn't $b = \{(x_i,y_j) \,| \, i,j\in\{0,1\}\}$? If it is, what is the difference between edges of $b$ and $b$ itself?} \knote{Ohh do you mean edges of $b$ to be the elements in the set. If so, I don't think we need to define it explicitly.}\unote{Okay.}
As before, we identify $b$ with the indicator vector $b\in \{0,1\}^{E(G)}$ of the edges of $b$, that is,
$b(e)=1$ iff $e\in\{(x_i,y_j):i,j\in\{0,1\}\}$. We use $\tilde{b}$ to denote the uniform distribution on the edges of the bow tie, when viewed as inputs to the $n$-fold $\GHZ$ game. More precisely, $\tilde{b}$ denotes the uniform distribution on $\{(x_i,y_j,x_i+y_j)\,|\, i,j\in\{0,1\}\}$.

We say that $b$ \textdef{differs in the $i$-th coordinate} for $i\in[n]$ if $x_0(i)\neq x_1(i)$, or equivalently, $y_0(i)\neq y_1(i)$, or equivalently, $z_0(i)\neq z_1(i)$. %We say that the \textdef{weight} of $b$ is $|\{ i\in [n]\, |\, x_0(i)\neq x_1(i)\}|$.
 
 \end{definition}  

Let $b$ be a bow tie and $I\subseteq [n]$ be the coordinates on which $b$ differs. The following claim shows that $\val^{(i)}(\cG|\tilde{b})\le 3/4$ for all $i\in I$. The proof is deferred to \Cref{proof:embed}

\begin{claim}\label{claim:embed}
Let $b=\{x_0,x_1\}\times \{y_0,y_1\}$ be a bow tie. Let $I\subseteq [n]$ be the subset of coordinates on which $b$ differs. Then,   $\val^{(i)}(\cG|\tilde{b})\le 3/4$ for all $i\in I$.
\end{claim}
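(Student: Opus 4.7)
The plan is to reduce the $i$-th coordinate value of $\cG | \tilde b$ to the value of the single-instance $\GHZ$ game, which is $3/4$. Concretely, given any deterministic product strategy $\bar f = \bar f_1\times \bar f_2 \times \bar f_3$ with $\bar f_j : \bbF_2^n \to \bbF_2$ for $\cG|\tilde b$ evaluated in coordinate $i$, I will construct a product strategy $f = f_1\times f_2 \times f_3$ with $f_j : \bbF_2 \to \bbF_2$ for the single $\GHZ$ game so that the two strategies succeed with exactly the same probability. Since the basic $\GHZ$ game has value $3/4$, the claim follows.

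The first step is a normalization. Let $z_0 = x_0 + y_0$ and $z_1 = x_0 + y_1$, so that the four queries in $\tilde b$ are $(x_{b_1}, y_{b_2}, z_{b_3})$ for $(b_1,b_2,b_3)\in \supp(\cQ)$. By independently swapping the labels $x_0\leftrightarrow x_1$, $y_0\leftrightarrow y_1$, $z_0\leftrightarrow z_1$ (this relabeling does not change the underlying bow tie, and since $x_0+y_0+z_0=0$ an even number of swaps is consistent), I can assume without loss of generality that $(x_0)_i = (y_0)_i = (z_0)_i = 0$. Then, because $i\in I$ (so the bow tie differs in coordinate $i$), we also have $(x_1)_i = (y_1)_i = (z_1)_i = 1$. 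In other words, after the relabeling the map $b\mapsto (x_b)_i$ (and similarly for $y$ and $z$) is the identity on $\bbF_2$.

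The second step is the embedding. Define $f_j(b) \eqdef \bar f_j(w_b^{(j)})$ where $w_b^{(1)}=x_b$, $w_b^{(2)}=y_b$, $w_b^{(3)}=z_b$. Sampling a query $(b_1,b_2,b_3)\sim \cQ$ for the basic $\GHZ$ game and feeding it through $f$ is equivalent (after the embedding $b_j\mapsto w_{b_j}^{(j)}$) to sampling $(x_{b_1}, y_{b_2}, z_{b_3}) \sim \tilde b$ and feeding it through $\bar f$. Moreover, by the normalization in the first step, the $i$-th coordinate of the query produced by this embedding is exactly $(b_1,b_2,b_3)$, so the win condition checked in coordinate $i$ of $\cG|\tilde b$ is identical to the win condition of the basic $\GHZ$ game on $(b_1,b_2,b_3)$, and the two answer tuples $f(b)$ and $(\bar f_j(w_{b_j}^{(j)}))_j$ coincide by construction. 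Hence $\val^{(i)}(\cG|\tilde b, \bar f) = \val(\GHZ, f) \le 3/4$, and taking the maximum over $\bar f$ (and using \Cref{fact:randomized-value} if needed to restrict to deterministic strategies) yields $\val^{(i)}(\cG|\tilde b) \le 3/4$.

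The only slightly subtle point, and the one I would check most carefully, is the parity justification for the normalization step: the swaps are not unconditionally free, because an odd number of them would change the bow tie to a different quadruple. The constraint $x_0+y_0+z_0 = 0$ together with $x_0+x_1 = y_0+y_1 = z_0+z_1$ guarantees that the parity of swaps needed to zero out coordinate $i$ in all three of $x_0, y_0, z_0$ is even, so the bow tie (as an unordered set of four tuples) is unchanged. Everything else is a straightforward bookkeeping of which coordinate of which tuple appears in which player's input.
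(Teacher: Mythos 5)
Your proof is correct and takes essentially the same approach as the paper: normalize so that $(x_0)_i=(y_0)_i=(z_0)_i=0$ (the parity observation you flag is exactly the one the paper makes in its technical overview), then embed the basic $\GHZ$ game via $\phi_j(a)\in\{x_a,y_a,z_a\}$ and pull back the strategy by $f_j(a)=(\bar f_j(\phi_j(a)))(i)$. The only cosmetic difference is that you write $\bar f_j:\F_2^n\to\F_2$ where the paper has $\bar f_j:\F_2^n\to\F_2^n$ and extracts the $i$-th output coordinate explicitly, but this is immaterial since only that coordinate affects $\val^{(i)}$.
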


Let $B$ denote the set of all bow ties. Consider the distribution on edges defined by first sampling a uniformly random bow tie from $B$, and then a uniformly random edge from the bow tie. We now provide an alternate description of this distribution. For each $z\in E_3 \cap\pi_3$, define $1_{z}\in \{0,1\}^{|E(G)|}$ as follows. For each $e=(x,y)\in E(G)$, define $1_{z}(e)=1$ if $x$ and $y$ are both matched in $M_{z}$ but not to each other, and define $1_{z}(e)=0$ otherwise. Alternatively, $1_z$ is the indicator of the set $((L_z\times R_z)\setminus M_z) \cap E(G)$. Let $v:= \E_{z\sim E_3\cap \pi_3}[1_z]$. Note that $v$ has $|E(G)|$ coordinates, each of which have non-negative values, so $v$ induces a distribution on $E(G)$. Consider this distribution $\tilde{v}=\frac{v}{\|v\|_1}$ on $E(G)$ defined by normalizing $v$. We show that this distribution is an alternate description of the aforementioned distribution.
% \knote{I guess good to say somewhere here also that $1_z$ is indicator of the set $((L_z\times R_z)\setminus M_z) \cap E(G)$}\unote{Okay, I am adding this.}

\begin{claim} \label{claim:equivalence}
$v=|E_3\cap \pi_3|^{-1}\cdot\pbra{ \sum_{b\in B} b}$. In particular, we can think of the distribution $\tilde{v}:=\frac{v}{\|v\|_1}$ on $E(G)$ as obtained by sampling a uniformly random bow tie $b$ in $G$ and outputting a uniformly random edge of $b$.
\end{claim}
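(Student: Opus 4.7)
The plan is to prove Claim~4.3 by a direct double-counting argument, showing that for every edge $e = (x_0, y_0) \in E(G)$, the quantity $\sum_{z \in E_3 \cap \pi_3} 1_z(e)$ equals the number of bow ties in $B$ that contain $e$. Once this edgewise identity is established, dividing by $|E_3 \cap \pi_3|$ immediately yields the first equality $v = |E_3 \cap \pi_3|^{-1} \sum_{b \in B} b$.

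For the first step, I would unpack the definition of $1_z(x_0,y_0)$. By construction, $1_z(x_0,y_0) = 1$ precisely when $x_0 \in L_z$, $y_0 \in R_z$, and $(x_0,y_0) \notin M_z$. Translating: there must exist $y' \in E_2 \cap \pi_2$ with $x_0 + y' = z$ (equivalently $y' = x_0 + z \in E_2 \cap \pi_2$), there must exist $x' \in E_1 \cap \pi_1$ with $x' + y_0 = z$ (equivalently $x' = y_0 + z \in E_1 \cap \pi_1$), and $z \neq x_0 + y_0$. Thus the $z$'s counted by the left-hand sum are exactly those $z \in E_3 \cap \pi_3$ satisfying $z \neq x_0+y_0$, $x_0 + z \in E_2 \cap \pi_2$, and $y_0 + z \in E_1 \cap \pi_1$.

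For the right-hand side, I would parameterize bow ties containing $(x_0,y_0)$ by their ``anti-diagonal'' value $z_1 := x_0+y_1 = x_1+y_0 \in E_3 \cap \pi_3$, so that $y_1 = x_0 + z_1$ and $x_1 = y_0 + z_1$. The conditions for $\{x_0,x_1\}\times\{y_0,y_1\}$ to be a bow tie in $E(G)$ become: $z_1 \in E_3 \cap \pi_3$, $x_0+z_1 \in E_2 \cap \pi_2$, $y_0 + z_1 \in E_1 \cap \pi_1$, $x_1 \neq x_0$ (equivalently $z_1 \neq x_0 + y_0$), and $y_1 \neq y_0$ (the same condition). Since the four edges of the bow tie are then automatically in $E(G)$ (the remaining required triple $(x_1,y_1,z_0)$ lies in $\supp(\cP)$ because $x_1 + y_1 = x_0 + y_0 = z_0 \in E_3 \cap \pi_3$), this establishes a bijection between the two sets. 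Combining with the previous step gives $\sum_z 1_z(e) = \sum_{b\in B} b(e)$ and hence the first claim.

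For the ``in particular'' statement, I compute $\|v\|_1 = |E_3 \cap \pi_3|^{-1} \sum_{b \in B} \|b\|_1 = 4|B|/|E_3 \cap \pi_3|$ since each bow tie consists of exactly four edges. Therefore $\tilde v(e) = v(e)/\|v\|_1 = \frac{1}{4|B|}\sum_{b\in B} b(e)$, which is precisely the probability of $e$ under the two-stage process of first sampling $b$ uniformly from $B$ and then sampling a uniformly random edge of $b$. I expect the only subtle point to be careful bookkeeping: verifying that the condition $z \neq x_0 + y_0$ corresponds exactly to $x_0 \neq x_1$ (and $y_0 \neq y_1$) in the bow-tie parameterization, and that no bow tie is double-counted because the set $\{x_0,x_1\}\times\{y_0,y_1\}$ is uniquely determined once one edge and the value $z_1$ are fixed.
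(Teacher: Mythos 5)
Your proposal is correct and follows the same argument as the paper: both fix an edge $e=(x_0,y_0)\in E(G)$ and establish a bijection between the $z_1\in E_3\cap\pi_3$ with $1_{z_1}(e)=1$ and the bow ties containing $e$, via $x_1:=y_0+z_1$, $y_1:=x_0+z_1$. You additionally spell out the $\|v\|_1=4|B|/|E_3\cap\pi_3|$ computation for the ``in particular'' clause, which the paper leaves implicit, but the underlying counting argument is the same.
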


 The proof of this is deferred to \Cref{proof:equivalence}. Our goal now is to show that the distribution $\tilde{v}$ is close to the uniform distribution over edges of $G$. To do so, we study some properties of $G$. Observe that $|E(G)|\triangleq |\cV|^2\cdot \E_{\substack{x\sim \pi_1\\z\sim \pi_3}}\sbra{E_1(x)\cdot E_2(x+z)\cdot E_3(z)}$. We apply \Cref{lem:fourierlemmaedge} with parameters $A=E_1\cap\pi_1,B=E_2\cap\pi_2,C=E_3\cap \pi_3$. Since $\pi\in \supp(\Pi(\cP|E))$, the set $\pi\cap \supp(\cP)$ is non-empty, therefore, we may choose $a\in \supp(\cP)$ so that $\pi=a+\cV^3$. This, along with \Cref{eq:temp3} implies that the first hypothesis of \Cref{lem:fourierlemmaedge} is satisfied. \Cref{lem:fourierlemmaedge} implies that 
	 \begin{align}\label{eq:edgecount}
	\Big | |E(G)|-|\cV|^2\cdot \mu_{\pi_1}(E_1)\cdot\mu_{\pi_2}(E_2)\cdot \mu_{\pi_3}(E_3) \Big | \le |\cV|^2\cdot \delta.
	 \end{align}	 
 We make use of the following bounds on the $\ell_1$ and $\ell_2$ norms of $v$. The proofs of these are by Fourier analysis and are deferred to \Cref{proof:l1norm,proof:l2norm}.

\begin{claim}\label{claim:l1norm} 
	\begin{align}
	\label{eq:l1norm}
	\notag \|v\|_1 &\ge  |\cV|^2 \cdot \pbra{\mu_{\pi_1}(E_1)^2\cdot \mu_{\pi_2}(E_2)^2\cdot  \mu_{\pi_3}(E_3) -3\cdot \delta } \\ &-|\cV|\cdot \pbra{\mu_{\pi_1}(E_1)\cdot \mu_{\pi_2}(E_2)  + 2\cdot\delta\cdot \mu_{\pi_3}(E_3)^{-1}}
	\end{align}
\end{claim}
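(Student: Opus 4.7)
The plan is to use \Cref{claim:equivalence} to reduce $\|v\|_1$ to a count of bow ties and then bound that count by two applications of \Cref{lem:fourierlemmaedge}. Since $\|b\|_1 = 4$ for every bow tie $b$, we have $\|v\|_1 = 4|B|/|E_3\cap\pi_3| = 4|B|/(|\cV|\mu_{\pi_3}(E_3))$. Parameterizing each bow tie $\{x_0, x_1\}\times\{y_0, y_1\}$ by $(x_0, y_0, d)$ with $d = x_0+x_1 = y_0+y_1 \in \cV\setminus\{0\}$ counts each bow tie exactly four times, so $4|B| + |E(G)| = T$, where
\[
T \;\eqdef\; \sum_{x\in\pi_1,\,y\in\pi_2,\,d\in\cV} E_1(x)E_1(x+d)E_2(y)E_2(y+d)E_3(x+y)E_3(x+y+d)
\]
and the $d = 0$ slice accounts for $|E(G)|$. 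The upper bound $|E(G)| \le |\cV|^2(\mu_{\pi_1}(E_1)\mu_{\pi_2}(E_2)\mu_{\pi_3}(E_3) + \delta)$ follows immediately from the first part of \Cref{lem:fourierlemmaedge} applied with $A = E_1, B = E_2, C = E_3$.

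The heart of the proof is a lower bound on $T$, which I would obtain via the substitution $z' \eqdef x+y+d \in \pi_3$. For fixed $(x,y)\in\pi_1\times\pi_2$ this bijects $d\in\cV$ with $z'\in\pi_3$, and the relations $x+d = y+z'$ and $y+d = x+z'$ hold. Setting $f_{z'}(x) \eqdef E_1(x)E_2(x+z')$ (an indicator on $\pi_1$) and $g_{z'}(y) \eqdef E_2(y)E_1(y+z')$ (an indicator on $\pi_2$) rewrites
\[
T \;=\; |\cV|^3\cdot\E_{z'\sim\pi_3}\Bigl[E_3(z')\cdot\E_{x\sim\pi_1,\,y\sim\pi_2}\bigl[f_{z'}(x)\,g_{z'}(y)\,E_3(x+y)\bigr]\Bigr].
\]
For each fixed $z'$, the inner expectation has the clean product form, so the first part of \Cref{lem:fourierlemmaedge} (using only the Fourier bound $\delta$ on $E_3$) yields $\E_{x,y}[f_{z'}(x)g_{z'}(y)E_3(x+y)] = F(z')^2\mu_{\pi_3}(E_3) \pm \delta$, where $F(z') \eqdef \E_{x\sim\pi_1}[E_1(x)E_2(x+z')] = \mu_{\pi_1}(f_{z'}) = \mu_{\pi_2}(g_{z'})$ (the second equality by a change of variables). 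Multiplying by $E_3(z')$, taking the outer expectation, and invoking the second part of \Cref{lem:fourierlemmaedge} with $A = E_1, B = E_2, C = E_3$ to conclude $\E_{z'}[E_3(z')F(z')^2] \ge \mu_{\pi_1}(E_1)^2\mu_{\pi_2}(E_2)^2\mu_{\pi_3}(E_3) - (\delta^2 + \delta)$ gives, after absorbing $\delta^2$ into $\delta$, the bound $T/|\cV|^3 \ge \mu_{\pi_3}(E_3)\bigl(\mu_{\pi_1}(E_1)^2\mu_{\pi_2}(E_2)^2\mu_{\pi_3}(E_3) - 3\delta\bigr)$.

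Combining these bounds on $T$ and $|E(G)|$, substituting into $\|v\|_1 = (T - |E(G)|)/(|\cV|\mu_{\pi_3}(E_3))$, and simplifying produces the claim. The main obstacle is identifying the right way to decouple the six indicator evaluations in $T$: a direct decomposition $T = \sum_d \E_{x,z}[A_d(x)B_d(z+x)C_d(z)]$ with $A_d = E_1\cdot E_1(\cdot+d)$ etc.\ is unworkable because the non-trivial Fourier coefficients of these ``doubled'' indicators need not be bounded by $\delta$. The substitution $z' = x+y+d$ precisely effects the required decoupling, splitting $T$ into a nested expectation whose inner $(x,y)$-integral factors as $f_{z'}(x)g_{z'}(y)\,E_3(x+y)$, so that each of the two applications of \Cref{lem:fourierlemmaedge} only leverages the pseudorandomness of a single set at a time and the error terms accumulate to $O(\delta\,\mu_{\pi_3}(E_3))$ rather than to something larger.
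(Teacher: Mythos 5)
Your proof is correct and in fact yields a slightly sharper bound than the claim as stated (the term $2\delta\cdot\mu_{\pi_3}(E_3)^{-1}$ improves to $\delta\cdot\mu_{\pi_3}(E_3)^{-1}$). The entry point differs from the paper's: the paper works with $\|v\|_1 = \E_{z\sim E_3\cap\pi_3}\bigl[\|1_z\|_1\bigr]$ directly, writing $\|1_z\|_1 = |E(G)\cap(L_z\times R_z)| - |M_z|$ and bounding the first term via \Cref{lem:fourierlemmaedge} with $A = L_z\cap\pi_1$, $B = R_z\cap\pi_2$, $C = E_3\cap\pi_3$; you instead reduce to the bow-tie count $4|B| = T - |E(G)|$ via \Cref{claim:equivalence} and expand $T$ as a triple sum over $(x,y,d)$. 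That said, the two routes converge: your change of variables $z' = x+y+d$ recovers exactly the paper's slicing, since your $f_{z'}$ and $g_{z'}$ are precisely the indicator functions of $L_{z'}$ and $R_{z'}$, and both proofs apply the first part of \Cref{lem:fourierlemmaedge} per-$z'$ with $(A,B,C) = (L_{z'}, R_{z'}, E_3)$ and then the second part with $(A,B,C) = (E_1, E_2, E_3)$ to control $\E_{z'\sim\pi_3}\bigl[E_3(z')\,\wt_\pi(z')^2\bigr]$. What your route adds is a cleaner combinatorial picture: the $(x_0, y_0, d)$-parameterization of bow ties is made explicit via $T$, and your remark that a naive $d$-indexed decomposition with ``doubled'' indicators $E_1\cdot E_1(\cdot+d)$ is unusable (their nontrivial Fourier coefficients need not be bounded by $\delta$) nicely explains why the $z'$-slicing is the right move.
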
	 	 

\begin{claim}\label{claim:l2norm}
\begin{equation}
\label{eq:l2norm}
\| v\|_2^2\le |\cV|^2\cdot\pbra{ \mu_{\pi_1}(E_1)^3\cdot \mu_{\pi_2}(E_2)^3\cdot \mu_{\pi_3}(E_3) +  10\cdot \sqrt{\delta} }
\end{equation}
\end{claim}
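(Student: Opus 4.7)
The plan is to reduce the claim to a Fourier-analytic bound on an explicit nine-factor sum. From \Cref{claim:equivalence} and the bow tie structure, an edge-level formula for $v$ emerges: for $e=(x,y)\in E(G)$, the bow ties through $e$ are in bijection with nonzero $s\in \cV$ satisfying $x+s\in E_1\cap\pi_1$, $y+s\in E_2\cap\pi_2$, $x+y+s\in E_3\cap\pi_3$. Substituting $z:=x+y+s$ shows this count equals $N(x,y)-1$, where $N(x,y):=\sum_{z\in \pi_3}E_1(y+z)E_2(x+z)E_3(z)$ (the $-1$ removes the forbidden choice $s=0$, i.e., $z=x+y$). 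Hence $v(e)=(N(e)-1)/|E_3\cap\pi_3|$ on $E(G)$, and since $N(e)\geq 1$ on $E(G)$, the inequality $(N(e)-1)^2\leq N(e)^2$ gives
\[
\|v\|_2^2 \;\leq\; \frac{T}{|E_3\cap\pi_3|^2}, \qquad T \;:=\; \sum_{(x,y)\in E(G)} N(x,y)^2.
\]

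Expanding the square, $T$ becomes a sum of nine indicator factors over $(x,y,z,z')\in\pi_1\times\pi_2\times\pi_3^2$:
\[
T \;=\; \sum_{x,y,z,z'}E_1(x)E_2(y)E_3(x+y)\cdot E_1(y+z)E_2(x+z)E_3(z)\cdot E_1(y+z')E_2(x+z')E_3(z').
\]
Expanding each of the nine indicators in its Fourier basis (with respect to the shifts $a_i$) and applying orthogonality on each of the four summation variables collapses $T$ to a constrained sum over 9-tuples of characters in $\widehat{\cV}$ satisfying four linear equations, with five free parameters. The main term (all nine characters trivial) contributes exactly $|\cV|^4\mu_{\pi_1}(E_1)^3\mu_{\pi_2}(E_2)^3\mu_{\pi_3}(E_3)^3$, which is the desired leading term of the claim once divided by $|E_3\cap\pi_3|^2=|\cV|^2\mu_{\pi_3}(E_3)^2$. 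For every non-main term, the linear map from the five free parameters to the nine Fourier arguments is injective (five of the arguments coincide with the free parameters themselves), so at least two of the nine Fourier coefficients are evaluated at non-zero characters. Combining $|\widehat{E_i}(\chi)|\leq\delta$ for $\chi\ne 0$ with Parseval ($\sum_\chi \widehat{E_i}(\chi)^2=\mu_{\pi_i}(E_i)$) and Cauchy--Schwarz (in the spirit of the proof of \Cref{lem:fourierlemmaedge}) bounds the contribution of each ``shape'' of non-main term, and summing over the finitely many shapes yields the total error.

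The main obstacle will be the bookkeeping of all non-main shapes and the quantitative verification that the total error, after division by $|\cV|^2\mu_{\pi_3}(E_3)^2$, is at most $10|\cV|^2\sqrt\delta$. The key inputs are the pseudorandomness bound $|\widehat{E_i}(\chi)|\le\delta$, Parseval to control square-sums of Fourier coefficients by $\mu_{\pi_i}(E_i)$, and the lower bound $\mu_{\pi_1}(E_1)\mu_{\pi_2}(E_2)\mu_{\pi_3}(E_3)\ge \alpha/20$ (obtained by combining $(\cP|\pi)(E)\ge\alpha/10$ with \cref{eq:edgecount}), which prevents the division by $\mu_{\pi_3}(E_3)^2$ from blowing up the error for small $\mu_{\pi_3}(E_3)$.
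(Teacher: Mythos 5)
Your reduction to the nine-factor sum is correct, and the setup is a legitimate alternative to the paper's argument. The edge-level formula $v(e) = (N(e)-1)/|E_3\cap\pi_3|$ and the bound $\|v\|_2^2 \le T/|E_3\cap\pi_3|^2$ both check out, as does the accounting of constraints: the four orthogonality equations leave $5$ free characters, and any nonzero solution forces at least two of the nine Fourier arguments to be nonzero. The paper, by contrast, never flattens the problem to a single five-dimensional Fourier sum. It factors through the intermediate sets $L_z\cap L_{z'}$ and $R_z\cap R_{z'}$ and applies \Cref{lem:fourierlemmaedge} twice (once on $\langle 1_z, 1_{z'}\rangle$ for fixed $z,z'$, once on $\E_{z'}[\mu_{\pi_1}(L_z\cap L_{z'})^2]$), then takes expectations over $z,z'$ last. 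That modular route controls the Cauchy--Schwarz/Parseval steps two characters at a time, so the error tracking is largely automatic.

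The genuine gap is exactly where you flag it: you have not carried out the shape-by-shape error estimate, and it is not a routine omission. To get the claimed $+10\sqrt{\delta}$ after dividing $T$ by $|\cV|^2\mu_{\pi_3}(E_3)^2$, the non-main contribution to $T/|\cV|^4$ must be at most $10\,\mu_{\pi_3}(E_3)^2\sqrt{\delta}$. That requires a bound in which the ``extra'' powers of $\mu_{\pi_3}(E_3)$ you lose on division are recovered from the Fourier coefficients, and the budget is tight: after Cauchy--Schwarz + Parseval, the natural per-shape bounds you obtain will be mixtures of the form $\delta^a\,\mu_{\pi_i}(E_i)^b$, and with five free summation indices and nine coupled Fourier arguments it is not automatic that the arithmetic lands on $\sqrt{\delta}$ rather than, say, a power of $\delta$ smaller than $1/2$ or an unbounded number of additive terms. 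The constant $10$ is even more delicate. The paper sidesteps this because each invocation of \Cref{lem:fourierlemmaedge} returns a clean, two-term error, and the $\sqrt{\cdot}$ arises explicitly from $\sqrt{a+b}\le\sqrt{a}+\sqrt{b}$ applied once. So: sound plan, same analytic toolkit, but the proof as written stops before the step that actually produces the claimed inequality, and that step is not a formality.
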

	 
We now bound $\|\tilde{v}\|_2 = \frac{\|v\|_2}{\|v\|_1}$ by plugging in appropriate bounds on $\delta$ and dividing \cref{eq:l2norm} by \cref{eq:l1norm}.   Our choice of $\delta= \alpha^{20}/n^{1/40}$, and our assumption that $\alpha/10\le (\cP|\pi)(E)$ (which in turn is at most $\min_{i\in [3]}\pbra{\mu_{\pi_i}(E_i)}$) implies that $\delta$ is much smaller than any $\mu_{\pi_i}(E_i)$.  In particular, we highlight that
\begin{align*}
\sqrt{\delta}& = o \big (\mu_{\pi_1}(E_1)^3\cdot \mu_{\pi_2}(E_2)^3\cdot \mu_{\pi_3}(E_3) \big )\\
\delta & = o \big (\mu_{\pi_1}(E_1)^2\cdot \mu_{\pi_2}(E_2)^2\cdot \mu_{\pi_3}(E_3) \big ) \\
\delta &= o \big (\mu_{\pi_1}(E_1) \cdot \mu_{\pi_2}(E_2) \cdot \mu_{\pi_3}(E_3) \big ) 
\end{align*}

Furthermore, since $|\cV|= 2^{\Omega(n)}$ and $1 \ge \mu_{\pi_i}(E_i) = \Omega(\alpha) = n^{-O(1)}$, we have 
% \[
% |\cV|^2\cdot \mu_{\pi_1}(E_1)^2\cdot \mu_{\pi_2}(E_2)^2\cdot \mu_{\pi_3}(E_3) = \omega \big ( |\cV|\cdot \mu_{\pi_1}(E_1)\cdot \mu_{\pi_2}(E_2) \big ).
% \]
\[
|\cV|\cdot \mu_{\pi_1}(E_1)\cdot \mu_{\pi_2}(E_2) = o\left(|\cV|^2\cdot \mu_{\pi_1}(E_1)^2\cdot \mu_{\pi_2}(E_2)^2\cdot \mu_{\pi_3}(E_3)\right).
\]

Thus the dominant term on the right-hand side of \cref{eq:l1norm} is $|\cV|^2 \cdot \mu_{\pi_1}(E_1)^2\cdot \mu_{\pi_2}(E_2)^2\cdot  \mu_{\pi_3}(E_3)$, and the dominant term on the right-hand side of \cref{eq:l2norm} is $|\cV|^2\cdot \mu_{\pi_1}(E_1)^3\cdot \mu_{\pi_2}(E_2)^3\cdot \mu_{\pi_3}(E_3)$.  More precisely, we have 
\begin{align}\label{lem:l1l2norm}
    \|v\|_1&\ge  (1 - o(1)) \cdot |\cV|^2\cdot \mu_{\pi_1}(E_1)^2\cdot\mu_{\pi_2}(E_2)^2\cdot \mu_{\pi_3}(E_3) \\
 \|v\|_2^2&\le  (1 + o(1)) \cdot |\cV|^2\cdot \mu_{\pi_1}(E_1)^3\cdot\mu_{\pi_2}(E_2)^3\cdot \mu_{\pi_3}(E_3).
\end{align}  
This implies that \begin{equation}
\label{eq:normalized-l2-bound}
    \|\tilde{v}\|_2^2 = \frac{\|v\|_2^2}{\|v\|_1^2}\le \frac{1 + o(1)}{|\cV|^2 \cdot \mu_{\pi_1}(E_1) \cdot \mu_{\pi_2}(E_2) \cdot \mu_{\pi_3}(E_3)}
\end{equation}

In comparison, \cref{eq:edgecount} gave that 
\[
|E(G)| \in (1 \pm o(1)) \cdot |\cV|^2\cdot \mu_{\pi_1}(E_1)\cdot \mu_{\pi_2}(E_2)\cdot \mu_{\pi_3}(E_3). 
\]
Thus we can rewrite \cref{eq:normalized-l2-bound} as
\begin{equation}
\label{eq:relative-normalized-l2-bound}
\|\tilde{v}\|_2 \leq \frac{1 + o(1)}{\sqrt{|E(G)|}}
\end{equation}
%\knote{Changed slightly}

This, together with the fact that by construction $\|\tilde{v}\|_1 = 1$, is sufficient to deduce that $\tilde{v}$ is close to the ``uniform distribution" vector $\tilde{u} \eqdef (\frac{1}{|E(G)|}, \ldots, \frac{1}{|E(G)|})$.  More formally, we have:
\begin{fact}
\label{fact:uniformity-criterion}
Suppose that $\tilde{v}\in \bbR^m$ is an $m$-dimensional vector such that $\|\tilde{v}\|_1 = 1$, and $\|\tilde{v}\|_2 = \frac{1 + \beta}{\sqrt{m}}$ for some $\beta \in [0,1]$. Then
\[
\|\tilde{v}-\tilde{u}\|_1 \le \sqrt{3 \beta},
\]
where $\tilde{u}$ denotes the vector $(\frac{1}{m}, \ldots, \frac{1}{m})$. %\knote{Changed slightly}
\end{fact}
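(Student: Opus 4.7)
The plan is to exploit the fact that in context, $\tilde{v}$ is the normalization of the nonnegative vector $v$, so $\tilde{v}$ is a probability distribution. In particular, $\sum_i \tilde{v}_i = \|\tilde{v}\|_1 = 1 = \sum_i \tilde{u}_i$, so the difference $w \eqdef \tilde{v} - \tilde{u}$ has coordinates summing to zero. Since $\tilde{u}$ is a constant vector, this is exactly the statement that $w$ is orthogonal to $\tilde{u}$ under the standard inner product. This orthogonality is the key structural fact that drives everything.

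From that orthogonality, I would apply the Pythagorean identity to write
\[
\|\tilde{v}\|_2^2 \;=\; \|\tilde{u} + w\|_2^2 \;=\; \|\tilde{u}\|_2^2 + \|w\|_2^2 \;=\; \tfrac{1}{m} + \|w\|_2^2.
\]
Combining this with the hypothesis $\|\tilde{v}\|_2^2 = (1+\beta)^2/m$ yields $\|w\|_2^2 = (2\beta + \beta^2)/m$, which for $\beta \in [0,1]$ is bounded by $3\beta/m$.

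The final step is to pass from an $\ell_2$ bound on $w$ to the desired $\ell_1$ bound. I would invoke Cauchy–Schwarz in the standard form $\|w\|_1 \le \sqrt{m}\cdot \|w\|_2$, which gives $\|w\|_1 \le \sqrt{m}\cdot \sqrt{3\beta/m} = \sqrt{3\beta}$, exactly the claim. There is no real obstacle here; the only subtlety is noting that the argument uses the fact (implicit in the application) that $\tilde{v}$ is nonnegative, so that $\|\tilde{v}\|_1 = 1$ really does mean the coordinates sum to $1$. Without that, the orthogonality $\langle w, \tilde{u}\rangle = 0$ would fail and the bound would be false (take, e.g., $\tilde{v} = (1/2, -1/2)$ in $m=2$, where $\beta = 0$ but $\|\tilde{v}-\tilde{u}\|_1 = 1$). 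Since the statement is applied to a normalized distribution on $E(G)$, this hypothesis holds for free.
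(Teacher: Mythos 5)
Your proof is correct and is essentially the same as the paper's: both arguments reduce to the identity $\langle \tilde u,\tilde v\rangle = \tfrac1m$ (you phrase this as orthogonality of $\tilde v - \tilde u$ to $\tilde u$ and invoke Pythagoras, the paper simply expands $\|\tilde v - \tilde u\|_2^2$), and then both apply Cauchy--Schwarz to pass from $\ell_2$ to $\ell_1$. Your explicit remark that the step $\sum_i \tilde v_i = \|\tilde v\|_1$ requires nonnegativity of $\tilde v$ (with the $(\tfrac12,-\tfrac12)$ counterexample) is a genuine subtlety that the paper's proof uses silently, and is worth stating.
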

\noindent
The proof of \cref{fact:uniformity-criterion} is deferred to \cref{sec:uniformity-criterion}

Applying \cref{fact:uniformity-criterion} to \cref{eq:relative-normalized-l2-bound} shows that $\dtv(\tilde{v},\tilde{u}) = o(1)$. In other words, a uniformly random edge of a uniformly random bow tie is distributed close to uniformly on $E(G)$. 

We now show that a typical bow tie differs in a considerable fraction of coordinates.
\begin{claim}
\label{claim:differsalot} $\Pr_{\substack{i\sim [n]\\ b\sim B}}[b\text{ differs in }i\text{-th coordinate}]\ge 1/3 - o(1)$.
\end{claim}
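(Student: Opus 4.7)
The plan is to study the distribution of $\Delta_b := x_0 + x_1 \in \cV$ (equivalently $y_0 + y_1$ or $z_0 + z_1$) when $b = \{x_0, x_1\} \times \{y_0, y_1\}$ is sampled uniformly from $B$. Since $b$ differs in coordinate $i$ if and only if $\Delta_b(i) = 1$, the target probability equals $\E_{b \sim B}[\hwt(\Delta_b)/n]$. I will argue that $\Delta_b$ has min-entropy $n - o(n)$, and then apply a standard entropy-counting bound on low-weight vectors in $\bbF_2^n$ to deduce that $\hwt(\Delta_b) > n/3$ with probability $1 - 2^{-\Omega(n)}$.

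For the min-entropy bound, first observe that by \cref{claim:equivalence} and the fact that every bow tie has exactly $4$ edges, $\|v\|_1 = 4|B|/|E_3 \cap \pi_3|$. Combining with $|E_3 \cap \pi_3| = |\cV|\,\mu_{\pi_3}(E_3)$ and the lower bound on $\|v\|_1$ from \cref{lem:l1l2norm} gives
\[
|B| \;\ge\; \tfrac{1 - o(1)}{4}\cdot |\cV|^3 \,\mu_{\pi_1}(E_1)^2\,\mu_{\pi_2}(E_2)^2\,\mu_{\pi_3}(E_3)^2.
\]
On the other hand, for any fixed $\Delta \in \cV \setminus \{0\}$, a bow tie $b$ with $\Delta_b = \Delta$ is determined by an unordered pair $\{x_0, x_0 + \Delta\} \subseteq \pi_1$ together with an unordered pair $\{y_0, y_0 + \Delta\} \subseteq \pi_2$, so the number of such bow ties is at most $(|\cV|/2)^2 = |\cV|^2/4$. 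Dividing these two bounds, and using $|\cV| \ge 2^{n - o(n)}$ (since $\cV$ has codimension $O(1/\delta^3) = o(n)$) together with $\mu_{\pi_i}(E_i) \ge \alpha/10 \ge n^{-1/100}/10 = n^{-O(1)}$, we obtain
\[
\max_{\Delta \in \cV}\;\Pr_{b \sim B}[\Delta_b = \Delta] \;\le\; \frac{1 + o(1)}{|\cV|\,\mu_{\pi_1}(E_1)^2\,\mu_{\pi_2}(E_2)^2\,\mu_{\pi_3}(E_3)^2} \;\le\; 2^{-n + o(n)}.
\]

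Finally, the standard binomial entropy bound $\sum_{k \le n/3} \binom{n}{k} \le 2^{H(1/3)\cdot n} \le 2^{0.92 n}$ (with $H$ the binary entropy function) yields $|\{v \in \bbF_2^n : \hwt(v) \le n/3\}| \le 2^{0.92 n}$, so a union bound gives
\[
\Pr_{b \sim B}[\hwt(\Delta_b) \le n/3] \;\le\; 2^{0.92 n}\cdot 2^{-n + o(n)} \;=\; 2^{-\Omega(n)}.
\]
Plugging this into the reformulation from the first paragraph,
\[
\Pr_{\substack{i\sim [n]\\ b\sim B}}[b\text{ differs in } i\text{-th coordinate}] \;=\; \E_{b \sim B}\!\left[\frac{\hwt(\Delta_b)}{n}\right] \;\ge\; \tfrac{1}{3}\bigl(1 - 2^{-\Omega(n)}\bigr) \;=\; \tfrac{1}{3} - o(1),
\]
as desired. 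The only thing that really needs checking is that the polynomial slack $\mu_{\pi_i}(E_i)^{-2} = n^{O(1)}$ is swamped by the gap $|\cV| \ge 2^{n - o(n)}$ so that the min-entropy of $\Delta_b$ really is $n - o(n)$; this is immediate from the quantitative choice $\delta = \alpha^{20}/n^{1/40}$, and the rest is bookkeeping on top of \cref{claim:equivalence} and \cref{lem:l1l2norm}.
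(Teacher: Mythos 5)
Your proposal is correct and follows essentially the same route as the paper: both rest on the lower bound $|B|\ge \Omega(1)\cdot|\cV|^3\,\mu_{\pi_1}(E_1)^2\mu_{\pi_2}(E_2)^2\mu_{\pi_3}(E_3)^2$ obtained from $\|v\|_1=4|B|/|E_3\cap\pi_3|$ together with \cref{claim:equivalence} and \cref{lem:l1l2norm}, combined with a counting/tail argument showing $x_0+x_1$ is almost uniform so that its Hamming weight exceeds $n/3$ except with probability $2^{-\Omega(n)}$. The only cosmetic difference is that the paper parameterizes bow ties by $(x_0,x_1,y_0)$ and applies a Chernoff bound to uniform $x_0,x_1$, whereas you bound the per-value probability of $\Delta_b$ (at most $|\cV|^2/4$ bow ties per difference) and union-bound over the Hamming ball of radius $n/3$.
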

\noindent
The proof of \cref{claim:differsalot} is deferred to \Cref{proof:differsalot}.
 
% \Cref{claim:embed}, along with \Cref{claim:differsalot} implies that
% $\Pr_{\substack{i\sim [n]\\ b\sim B}}[\val^{(i)}_{\GHZ}(\tilde{b})\le 3/4]\ge 0.3.$ For those $i\in[n]$ and $b\in B$ such that $b$ doesn't differ at the $i$-th coordinate, we bound $\val^{(i)}_{\GHZ}(\tilde{b})$ by 1. This, along with \Cref{claim:equivalence} implies that $\E_{i\in [n]}\sbra{\val^{(i)}_{\GHZ}(\tilde{v})}\le \E_{\substack{i\sim [n]\\ b\sim B}}[\val^{(i)}_{\GHZ}(\tilde{b})]\le 0.75\times 0.3+1\times 0.7\le 0.925.$ Since $\dtv(\tilde{u},\tilde{v})\le 0.071$ and $\tilde{u}$ corresponds to $\cP|\pi,E$, this implies that $\E_{i\sim [n]}\sbra{\val^{(i)}(\cP|\pi,E )}\le 0.925+1\times 0.071\le 0.996$. Since this holds with probability at least $1-\frac{3\cdot \delta^2}{\cP(E)}-1/1000\ge 0.998$ over $\pi\sim \Pi(\cP|E)$ (for large $n\in\bbN$), we have $\E_{i\sim [n]}\sbra{\val^{(i)}(\cP|E )}\le \underset{\substack{i\sim [n]\\\pi\sim \Pi(\cP|E)}}{\E}\sbra{\val^{(i)}(\cP|E,\pi)}\le 0.996+1\times 0.002< 0.999.$ This completes the proof.

\Cref{claim:embed}, along with \Cref{claim:differsalot} implies that
$\Pr_{\substack{i\sim [n]\\ b\sim B}}[\val^{(i)}(\cG|\tilde{b})\le 3/4]\ge 1/3-o(1)\ge 0.3.$ For those $i\in[n]$ and $b\in B$ such that $b$ doesn't differ at the $i$-th coordinate, we bound $\val^{(i)}(\cG|\tilde{b})$ by 1. This, along with \Cref{claim:equivalence} implies that $\E_{i\sim [n]}\sbra{\val^{(i)}(\cG|\tilde{v})}\le \E_{\substack{i\sim [n]\\ b\sim B}}[\val^{(i)}(\cG|\tilde{b})]\le 0.75\times 0.3+1\times 0.7\le 0.925.$ Since $\dtv(\tilde{u},\tilde{v})\le o(1)$ and $\tilde{u}$ corresponds to $\cP|\pi,E$, this implies that $\E_{i\sim [n]}\sbra{\val^{(i)}(\cG|\pi,E )}= 0.925+o(1) \leq 0.93$. Since $\pi \sim \Pi(\cP | E)$ is good with probability at least $1- \delta\cdot\alpha^{-1}-1/10\ge 0.9-o(1) \geq 0.8$, we have $\E_{i\sim [n]}\sbra{\val^{(i)}(\cG|E )}\le \underset{\substack{i\sim [n]\\\pi\sim \Pi(\cP|E)}}{\E}\sbra{\val^{(i)}(\cG|E,\pi)}
% \le \left(0.9- o(1)\right)\times (0.925+o(1)) + (0.1+o(1))\times 1 < 0.94+o(1),$
\le 0.8\times 0.93 + 0.2 \times 1 < 0.95$. This, along with \Cref{lem:parallel_repetition_criteria} completes the proof.
% \knote{made minor changes to values}
%\jnote{the notation with all the $\val$ variants should be made consistent}

\subsection[Proof of Lemma 5.1]{Proof of \Cref{lem:parallel_repetition_criteria}}\label{proof:parallel_repetition_criteria}

%\jnote{heads up that I'm skipping the proofreading of this subsection for now, because the lemma follows from HR20}
\begin{proof}[Proof of \cref{lem:parallel_repetition_criteria}]
	
Let $\cP=\cQ^n$. Choose the largest integer $m\ge 0$ such that $32^{-m}\ge \rho(n)\cdot \tfrac{2}{c}$. Note that $m=\Theta(\log(1/\rho(n)))$. Fix any deterministic product strategy $\bar{f}=(\bar{f}_1,\bar{f}_2,\bar{f}_3)$ for the players where $\bar{f}_i:\F_2^n\to \F_2^n$ denotes the strategy for the $i$-th player. Let $Y_i=\bar{f}_i(X_i)\in \F_2^n$ denote the output of player $i$ on input $X_i$. Let $\{j_1, \ldots, j_m\} \subseteq [n]$ be a set of coordinates. Let $W_i$ denote the event of winning the $\GHZ$ game in the $j_i$-th coordinate under the strategy $\bar{f}$ and let $W_{\le i}:=W_1\wedge \ldots\wedge W_{i}$. Observe that
\[ \val(\cG,\bar{f})\le \prod_{i=0}^{m-1} \Pr\sbra{ W_{i+1}\mid W_{\le i}}.\]
 
We show how to construct a sequence of coordinates so that every term in the above product is at most $1-c/2$. This would imply that  $\val(\cG)\le  (1-c/2)^{\Theta(\log(1/\rho(n))}=\rho(n)^{\Omega(1)}.$  Fix any $i\in\{0,\ldots,m-1\}$ and assume that we have found $j_1,\ldots,j_i$. Let $X\sim \cP$ and $X_{\le i}$ denote $X$ restricted to the coordinates $\{j_1,\ldots,j_{i}\}$. Let $Y_{\le i}$ denote the outputs of the players restricted to the coordinates $\{j_1,\ldots,j_{i}\}$. Let $Z_{\le i}=(X_{\le i},Y_{\le i})$. Since $W_{\le i}$ is a function of $Z_{\le i}$, we have
\begin{equation} \Pr\sbra{ W_{i+1}\mid W_{\le i} }= \underset{z_{\le  i}\sim Z_{\le i}|W_{\le i} }{\E}\sbra{\Pr\sbra{W_{i+1}\mid Z_{\le i}=z_{\le i}}}\le \underset{z_{\le  i}\sim Z_{\le i}|W_{\le i} }{\E}  \sbra{\val^{(j_{i+1})}\pbra{\cG| Z_{\le i}=z_{\le i} }}. \label{eq:parallelrepetition} \end{equation}
Let $F=F(z_{\le i})$ denote the event that  $\cP\sbra{Z_{\le i}=z_{\le i}|W_{\le i}}\ge \tfrac{c}{2}\cdot  \tfrac{1}{N}$ where $N=32^i\ge \supp(Z_{\le i})$. We argue that $F$ occurs with probability at least $1-c/2$. This is because we are sampling $z_{\le i}$ with probability $\cP[Z_{\le i}=z_{\le i}|W_{\le i}]$, hence the measure of $z_{\le i}$ for which $\cP[Z_{\le i}=z_{\le i}|W_{\le i}]\le \tfrac{c}{2}\cdot \tfrac{1}{N}$ is at most $\tfrac{c}{2}$. Fix any $z_{\le i}$ such that $F$ holds. Our choice of $m$ implies that $\tfrac{1}{N}\cdot \tfrac{c}{2}\ge \rho(n)$. Note that we can express the distribution $\cP|Z_{\le i}=z_{\le i}$ as $\cP|E$ where $E=E_1\times E_2\times E_3$ for $E_1,E_2,E_3\subseteq \bbF_2^n$ and $\cP(E)\ge \rho(n)$. The hypothesis of \Cref{lem:parallel_repetition_criteria} implies that $\E_{j\sim [n]}\sbra{\val^{(j)}\pbra{\cG|Z_{\le i}=z_{\le i}}}\le 1-c$. This implies that
\begin{align*}
\underset{\substack{z_{\le  i}\sim Z_{\le i}|W_{\le i} \\j\sim [n]}}{\E}  \sbra{\val^{(j)}\pbra{\cG| Z_{\le i}=z_{\le i} }}&\le \Pr_{z_{\le  i}\sim Z_{\le i}|W_{\le i} }[\neg F] +\underset{\substack{z_{\le  i}\sim Z_{\le i}|W_{\le i},F\\j\sim [n]} }{\E}  \sbra{\val^{(j)}\pbra{\cG| Z_{\le i}=z_{\le i}}}\\
&\le \tfrac{c}{2}+1-c = 1-\tfrac{c}{2}.
\end{align*}
By linearity of expectation, we can fix a $j\in[n]$ such that $\underset{z_{\le  i}\sim Z_{\le i}|W_{\le i}}{\E}  \sbra{\val^{(j)}\pbra{\cG| Z_{\le i}=z_{\le i} }}\le 1-\tfrac{c}{2}$. Note that $j\notin \{j_1,\ldots,j_{i}\}$ since we already win the game on these coordinates. This, along with \cref{eq:parallelrepetition} completes the proof.
\end{proof}

\subsection[Proof of Claim 5.2]{Proof of \Cref{claim:embed}}\label{proof:embed}

\begin{proof}[Proof of \cref{claim:embed}]
% \knote{Can we say that by swapping $x_0, x_1$ and $y_0, y_1$, we can assume $x_0(i)=y_0(i)=0$. Then $\phi_1(a)=x_a$, and so on, and don't need to use $\phi$'s basically.}\unote{That makes sense.}
Let $i\in I$. Since the bow tie $b$ differs in the $i$-th coordinate, we have \[\{x_0(i),x_1(i)\}=\{y_0(i),y_1(i)\} =\{z_0(i),z_1(i)\}=\{0,1\}.\] We may thus assume without loss of generality that $x_0(i)=y_0(i)=0$. Define embeddings $\phi_1:\F_2\to \{x_0,x_1\}$, $\phi_2:\F_2\to \{y_0,y_1\}$ and $\phi_3:\F_2\to \{z_0,z_1\}$ at $a\in \F_2$ by $\phi_1(a)=x_a$, $\phi_2(a)=y_a$ and $\phi_3(a)=z_a$.
It follows for all $a\in \{0,1\}$ and $j\in[3]$, we have $(\phi_j(a))(i)=a$. In particular, for $\phi=\phi_1\times \phi_2\times \phi_3$, the distribution $\phi(\cQ)$ is exactly the distribution $\tilde{b}$. 
Given any strategies $\bar{f}_1,\bar{f}_2,\bar{f}_3:\F_2^n\to \F_2^n$ for the players for the $n$-fold $\GHZ$ game restricted to the query distribution $\tilde{b}$, the functions $\phi_1,\phi_2,\phi_3$ induce a strategy for the $\GHZ$ game as follows. Define $f_j:\F_2\to\F_2$ by $f_j(a)=(\bar{f}_j(\phi_j(a)))(i).$ The success probability of the strategy $f_1\times f_2\times f_3$ on the distribution $\cQ$ is exactly the success probability in the $i$-th coordinate of the strategy $\bar{f}_1\times \bar{f}_2\times \bar{f}_2$ on the distribution $\tilde{b}$. It follows that $\val^{(i)}(\cG|\tilde{b})\le 3/4$.

\end{proof}

\subsection[Proof of Claim 5.3]{Proof of \Cref{claim:equivalence}}\label{proof:equivalence}

\begin{proof}[Proof of \cref{claim:equivalence}]
	Fix any $e\in E(G),e=(x_0,y_0)$. This implies that $x_0\in E_1\cap \pi_1,y_0\in E_2\cap \pi_2$  and $z_0:=x_0+y_0 \in E_3\cap \pi_3$. Note that $v(e)=\Pr_{z\sim E_3\cap\pi_3}\sbra{(x_0,y_0)\in (L_z\times R_z)\setminus M_z}$. For any $z_1\in E_3\cap \pi_3$, 
	\begin{align*} e\in (L_{z_1}\times R_{z_1})\setminus M_{z_1}&\iff x_0+z_1\in E_2\cap \pi_2, y_0+z_1\in E_1\cap\pi_1,z_1\neq z_0 \\
		&\iff x_0,x_1\in E_1\cap\pi_1, y_0,y_1\in E_2\cap\pi_2,z_1\neq z_0\in E_3\cap \pi_3\\
		&\text{ where }x_1:=y_0+z_1,y_1:=x_0+z_1\\
		&\iff \{x_0,x_1\}\times\{y_0,y_1\}\text{ is a bow tie}\\
		&\text{ where }x_1:=y_0+z_1,y_1:=x_0+z_1.\end{align*}
	This implies that for all $e=(x_0,y_0)\in E(G)$ and $z_1\in E_3\cap \pi_3$, we have $1_{z_1}(e)=1$ if and only if  $b=\{x_0,x_1\}\times\{y_0,y_1\}$ is a bow tie. Observe that as we vary $z_1\in E_3\cap \pi_3$, we obtain all possible bow ties that contain the edge $e$, i.e. the bow ties $b$ for which $b(e)\neq 0$. This implies that $v\triangleq \E_{z_1\sim E_3\cap\pi_3}\sbra{1_z}= |E_3\cap \pi_3|^{-1}\cdot\pbra{ \sum_{b\in B} b}.$
\end{proof}

\subsection[Proof of Claim 5.4]{Proof of \Cref{claim:l1norm}}\label{proof:l1norm}

For ease of notation, we define weight functions as follows. 
\begin{definition}[Weight functions] \label{def:wtfunctions} Let $\cP=\cQ^n$. For $z\in \pi_3$, let
	\[\wt_\pi(z):=\Pr_{X\sim \cP}\sbra{(X_1\in E_1\text{ and }X_2\in E_2)|(X\in\pi\text{ and } X_{3}=z)}=\underset{x\sim\pi_1}{\E}\sbra{E_1(x)E_2 (x+z)}. \]
\end{definition}

\begin{proof}[Proof of \Cref{claim:l1norm}]

Let $z\in E_3\cap\pi_3$. Note that $\wt_{\pi}(z)=\mu_{\pi_1}(L_z)=\mu_{\pi_2}(R_z)$. Observe that $\|1_z\|_1=|E(G)\cap (L_z\times R_z)\setminus M_z|$. We apply \Cref{lem:fourierlemmaedge} with parameters $A=L_z\cap\pi_1,B=R_z\cap\pi_2,C=E_3\cap \pi_3$. The first hypothesis of \Cref{lem:fourierlemmaedge} is satisfied due to \cref{eq:temp3}. \Cref{lem:fourierlemmaedge} implies that
\begin{align*} \abs{E(G)\cap (L_{z}\times R_{z})} &\triangleq |\cV|^2\cdot  \underset{\substack{z'\sim \pi_3\\x\sim \pi_1}}{\E}\sbra{L_z(x)\cdot R_z(x+z')\cdot E_3(z')}\\
&\ge |\cV|^2\cdot \pbra{ \mu_{\pi_1}(L_z)\cdot \mu_{\pi_2}(R_z)\cdot \mu_{\pi_3}(E_3)-\delta }\\
&\triangleq  |\cV|^2 \cdot\pbra{\wt_{\pi}(z)^2\cdot \mu_{\pi_3}(E_3) - \delta }.
\end{align*}
Similarly, $ |M_z|\triangleq |\cV|\cdot \E_{x\sim \pi_1}\sbra{E_1(x)\cdot E_2(x+z)}= |\cV|\cdot \wt_\pi(z)$. We apply
 \Cref{lem:fourierlemmaedge} with parameters $A=E_1,B=E_2,C=E_3$. All the hypothesis are satisfied due to \cref{eq:temp3}.  \Cref{lem:fourierlemmaedge}, along with conditioning $z\sim \pi_3$ on $z\in E_3$ implies that
\begin{equation} \label{eq:temp4} \abs{\E_{z\sim E_3\cap\pi_3}\sbra{\wt_\pi(z)^2}-\mu_{\pi_1}(E_1)^2\cdot \mu_{\pi_2}(E_2)^2}\le 2\cdot\delta \cdot \mu_{\pi_3}(E_3)^{-1}.
\end{equation}
% \knote{the expectations should be inside the     absolute value and only on $\wt_\pi(z)$ terms? Similarly in next claim.}\unote{Yes}
\[\label{eq:temp5}\abs{ \E_{z\sim E_3\cap \pi_3}\sbra{\wt_\pi(z)}-\mu_{\pi_1}(E_1)\cdot \mu_{\pi_2}(E_2)}\le 2\cdot \delta \cdot \mu_{\pi_3}(E_3)^{-1}.\]
Substituting this in the previous inequalities and taking an expectation over $z\sim E_3\cap\pi_3$,
\begin{align*}\|v\|_1= \E_{z\sim E_3\cap \pi_3}\sbra{\|1_z\|_1} 
	&=\E_{z\sim E_3\cap \pi_3}\sbra{\abs{E(G)\cap (L_z\times R_z)}-\abs{M_z}}\\
	&\ge |\cV|^2 \cdot\pbra{\E_{z\sim E_3\cap\pi_3}  \sbra{\wt_{\pi}(z)^2}\cdot \mu_{\pi_3}(E_3) - \delta }  - |\cV|\cdot \E_{z\sim E_3\cap \pi_3}\sbra{\wt_\pi(z)}\\
	&\ge  |\cV|^2 \cdot \pbra{\mu_{\pi_1}(E_1)^2\cdot \mu_{\pi_2}(E_2)^2\cdot  \mu_{\pi_3}(E_3) -3\cdot\delta } \\
	&-|\cV|\cdot \pbra{\mu_{\pi_1}(E_1)\cdot \mu_{\pi_2}(E_2)  + 2\cdot\delta\cdot \mu_{\pi_3}(E_3)^{-1}}.  \qedhere
\end{align*} 
\end{proof}

\subsection[Proof of Claim 5.5]{Proof of \Cref{claim:l2norm}}\label{proof:l2norm}

\begin{proof}[Proof of \cref{claim:l2norm}] Define $\wt_\pi(\cdot)$ as in the proof of \Cref{claim:l1norm}. Let $z,z'\in E_3\cap\pi_3$. Observe that $\langle 1_z,1_{z'}\rangle=\abs{E(G)\cap ((L_z\cap L_{z'})\times(R_z\cap R_{z'}))\setminus(M_z\cup M_{z'})}$. We apply \Cref{lem:fourierlemmaedge} with parameters $A=L_z\cap L_{z'}\cap \pi_1$, $B=R_z\cap R_{z'}\cap \pi_2$ and $C=E_3\cap \pi_3$. The first hypothesis is satisfied due to \cref{eq:temp3}. \Cref{lem:fourierlemmaedge} implies that
\begin{align*} \langle 1_z,1_{z'}\rangle&= \big |E(G)\cap ((L_z\cap L_{z'})\times (R_z\cap R_{z'}))\setminus (M_z\cup M_{z'}) \big | \\
	&\le |\cV|^2\cdot \pbra{\mu_{\pi_1}(L_z\cap L_{z'})\cdot \mu_{\pi_2}(R_z\cap R_{z'})\cdot \mu_{\pi_3}(E_3)+\delta  }.\end{align*}
Taking an expectation over $z'\sim E_3\cap \pi_3$ and applying Cauchy-Schwartz  yields that
\begin{align*} &\E_{z'\sim E_3\cap\pi_3}\sbra{\langle 1_z,1_{z'}\rangle}\\
 	&\le |\cV|^2\cdot  \E_{z'\sim E_3\cap\pi_3}\sbra{\mu_{\pi_1}(L_z\cap L_{z'})\cdot \mu_{\pi_2}(R_z\cap R_{z'})\cdot \mu_{\pi_3}(E_3)+ \delta }\\
 	&\le |\cV|^2\cdot \pbra{ \sqrt{\E_{z'\sim E_3\cap \pi_3}\sbra{\mu_{\pi_1}(L_z\cap L_{z'})^2}}\cdot \sqrt{\E_{z'\sim E_3\cap \pi_3}\sbra{\mu_{\pi_2}(R_z\cap R_{z'})^2}}\cdot \mu_{\pi_3}(E_3)+ \delta }.
\end{align*}
Observe that $\mu_{\pi_1}(L_z\cap L_{z'})=\E_{x\sim \pi_1}\sbra{L_z(x)E_2(x+z')}$ for all $z'\in E_3\cap \pi_3$. We now apply \Cref{lem:fourierlemmaedge} with parameters $A=L_z\cap \pi_1,B=E_2\cap \pi_2,C=E_3\cap \pi_3$. All the hypotheses are satisfied due to \cref{eq:temp3}. \Cref{lem:fourierlemmaedge}, along with the aforementioned observation implies that
\[ \abs{\E_{z'\sim E_3\cap \pi_3}\sbra{ \mu_{\pi_1}(L_z\cap L_{z'})^2}- \mu_{\pi_1}(L_z)^2\cdot \mu_{\pi_2}(E_2)^2   }\le 2\cdot \delta\cdot \mu_{\pi_3}(E_3)^{-1} .\]
An analogous inequality holds for $|R_z\cap R_{z'}|$. Substituting this in the previous inequality and using the fact that $\sqrt{a+b}\le \sqrt{a}+\sqrt{b}$, we have
\begin{align*} &\E_{z'\sim E_3\cap\pi_3}\sbra{\langle 1_z,1_{z'}\rangle}
	\\&\le |\cV|^2\cdot \pbra{ \pbra{ \mu_{\pi_1}(L_z)\cdot \mu_{\pi_2}(E_2) +\sqrt{\tfrac{2\cdot \delta}{\mu_{\pi_3}(E_3)}}} \cdot  \pbra{ \mu_{\pi_2}(R_z)\cdot \mu_{\pi_1}(E_1) +\sqrt{\tfrac{2\cdot \delta}{\mu_{\pi_3}(E_3)}}} \cdot \mu_{\pi_3}(E_3)+ \delta  }\\ 
	&\le|\cV|^2\cdot\pbra{ \mu_{\pi_1}(L_z)\cdot \mu_{\pi_2}(R_z)\cdot \mu_{\pi_1}(E_1)\cdot \mu_{\pi_2}(E_2)\cdot \mu_{\pi_3}(E_3) +  8\cdot \sqrt{\delta} }\\
	&=|\cV|^2\cdot\pbra{ \wt_{\pi}(z)^2\cdot \mu_{\pi_1}(E_1)\cdot \mu_{\pi_2}(E_2)\cdot \mu_{\pi_3}(E_3) +  8\cdot \sqrt{\delta} }.
\end{align*}
We now take an expectation over $z\sim E_3\cap\pi_3$ and use \cref{eq:temp4} to conclude that
\[
\E_{z,z'\sim E_3\cap\pi_3}\sbra{\langle 1_z,1_{z'}\rangle}\le |\cV|^2\cdot\pbra{ \mu_{\pi_1}(E_1)^3\cdot \mu_{\pi_2}(E_2)^3\cdot \mu_{\pi_3}(E_3) + 10\cdot \sqrt{\delta}}.  \qedhere
\]
\end{proof}
 
\subsection[Proof of Fact 5.6]{Proof of \cref{fact:uniformity-criterion}}
\label{sec:uniformity-criterion}
\begin{proof}[Proof of \cref{fact:uniformity-criterion}]
\begin{align*}
\|\tilde{v}-\tilde{u}\|_2^2 & = \langle \tilde{v} - \tilde{u}, \tilde{v} - \tilde{u} \rangle \\ 
& = \|\tilde{v}\|_2^2+\|\tilde{u}\|_2^2- 2 \langle \tilde{u}, \tilde{v} \rangle \\
& = \frac{1 + 2 \beta + \beta^2}{m} + \frac{1}{m} - \frac{2}{m}   \\
& = \frac{2 \beta + \beta^2}{m} \leq \frac{3\beta}{m}.
\end{align*}
Finally, we bound the $\ell_1$ distance in terms of the $\ell_2$ distance:
\[
\|\tilde{v}-\tilde{u}\|_1 \le \|\tilde{v}-\tilde{u}\|_2 \cdot \sqrt{m} \le \sqrt{3\beta}.  \qedhere
\]
\end{proof}

 \subsection[Proof of Claim 5.7]{Proof of \Cref{claim:differsalot}}\label{proof:differsalot}
 
  \begin{proof}[Proof of \cref{claim:differsalot}]
 	 It suffices to show that a random $b\sim B$ differs in less than $n/3$ coordinates with probability at most $2^{-\Omega(n)}=o(1)$.

	The Chernoff bound implies that $\Pr_{x_0, x_1 \sim \F_2^n}\sbra{\hwt(x_0+x_1)< n/3} \leq 2^{-\Omega(n)}$. We condition on $x_0,x_1\in \pi_1$ to conclude that $\Pr_{x_0, x_1 \sim \pi_1}\sbra{\hwt(x_0+x_1)< n/3} \leq 2^{-\Omega(n)}\cdot  \tfrac{2^{2n}}{|\cV|^2}$.
	
	Let $b = \{x_0, x_1\}\times \{y_0, y_1\}$ be a bow tie. By definition, we have $y_1=x_0+x_1+y_0$. In particular, the bow tie $b$ is uniquely identified by $x_0, x_1, y_0$. This implies that the probability that a random $b\sim B$ differs in less than $n/3$ coordinates is precisely
	\begin{align*}
		&\frac{	 |\cV|^3}{|B|} \Pr_{\substack{x_0, x_1 \sim \pi_1\\ y_0 \sim \pi_2\\ y_1=x_0+x_1+y_0}} \sbra{\{x_0, x_1\}\times \{y_0, y_1\}\in B\text{ and } \hwt(x_0+x_1)< n/3}
	\\
		\le\ & 	\frac{ |\cV|^3}{|B|} \Pr_{x_0, x_1 \sim \pi_1} \sbra{\hwt(x_0+x_1)< n/3} 
		\\
		\le\ & 	\frac{ |\cV|^3}{|B|}\cdot  2^{-\Omega(n)} \cdot \frac{2^{2n}}{|\cV|^2}
	\end{align*}
	
		Recall that $v= \E_{z\sim E_3\cap \pi_3}[1_z] = \frac{1}{\mu_{\pi_3}(E_3) \cdot |\cV|} \sum_{z\in E_3\cap \pi_3} 1_z$, where for each $e$, $\sum_{z\in E_3\cap \pi_3} 1_z (e)$ equals the number of bow ties containing the edge $e$.
	Since each bow tie contains 4 edges, we have that $\|v\|_1 = \frac{4}{\mu_{\pi_3}(E_3)\cdot|\cV|}\cdot |B|$.
	Then, equation (\ref{lem:l1l2norm}) implies that
	\[ |B| \ge \frac{1}{8} \cdot |\cV|^3\cdot \mu_{\pi_1}(E_1)^2\cdot\mu_{\pi_2}(E_2)^2\cdot \mu_{\pi_3}(E_3)^2 \geq \frac{1}{8}\cdot |\cV|^3\cdot \alpha^6. \]
	
    This implies that $\tfrac{|\cV|^3}{|B|}\le 8/ \alpha^6$. Recall that $\alpha \geq n^{-O(1)}$ and the co-dimension of $\cV$ is $o(n)$. This implies that $\tfrac{2^{2n}}{|\cV|^2}=2^{o(n)}$.  This along with the above calculation implies that the probability that a uniformly random $b\sim B$ differs in less than $n/3$ coordinates is at most $\frac{8\cdot 2^{-\Omega(n)}}{ \alpha^6}\cdot 2^{o(n)} = 2^{-\Omega(n)}$. This completes the proof.
 \end{proof}

\appendix

\section{Proof of \cref{cor:pseudorandom}}\label{app:pseudorandom}
 
Recall the statement of the Proposition:
\begin{proposition} 
Let $\cP=\cQ^n$. Let $E=E_1\times E_2\times E_3 \subseteq (\F_2^n)^3$ be such that $\cP(E) = \alpha$. For all $\delta>0$, there exists an affine partition $\Pi$ of $(\bbF_2^n)^3$ of codimension at most $\frac{3}{\delta^3}$ such that the following holds. With probability at least $1-\frac{\delta}{\alpha}$ over $\pi\sim \Pi(\cP|E)$, 
\begin{equation}\label{property} \text{ for all }i\in[3]\text{ and non-zero }\chi \in \widehat{\cV},\quad  \abs{\widehat{E_i|_{ \pi_i}}(\chi)}\le   \delta \end{equation}
where $\pi$ is of the form $\pi_1\times\pi_2\times\pi_3$ for affine shifts $\pi_1,\pi_2,\pi_3$ of some subspace $\cV$ of $\F_2^n$.\label{prop:appendix}
\end{proposition}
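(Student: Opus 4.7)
The plan is to prove the proposition by an iterative Szemer\'edi-regularity-style refinement driven by an energy-increment argument. I would start from the trivial partition $\Pi_0 = \{(\F_2^n)^3\}$ with $\cV = \F_2^n$ (codimension $0$), and refine step by step. At each iteration, if the current partition violates the desired conclusion, I locate a bad part $\pi^{\star} = a^{\star} + (\cV^{\star})^3$ together with a witness $(i^{\star}, \chi^{\star})$, where $i^{\star} \in [3]$ and $\chi^{\star} \in \widehat{\cV^{\star}}$ is a non-zero character satisfying $|\widehat{E_{i^\star}|_{\pi_{i^\star}^\star}}(\chi^\star)| > \delta$, and I refine by replacing $\cV^\star$ with $\cV^\star \cap \ker(\chi^\star)$. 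This splits $\pi^\star$ into $8$ sub-parts of codimension one greater (of which exactly $4$ lie in $\supp(\cP)$) while preserving the affine product structure; the maximum codimension in the final partition equals the longest chain of refinements along any lineage, which is bounded by the total iteration count.

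To control the iteration count, I use the potential
\[
\Phi(\Pi) \;=\; \sum_{i=1}^3 \sum_{\pi \in \Pi} \cP(\pi) \cdot \mu_{\pi_i}(E_i)^2 \;\in\; [0, 3].
\]
A short Parseval-type computation, applying the identity $\tfrac{1}{2}(\mu_{\pi_i^{(0)}}(E_i)^2 + \mu_{\pi_i^{(1)}}(E_i)^2) = \mu_{\pi_i}(E_i)^2 + \widehat{E_i|_{\pi_i}}(\chi)^2$ coordinate-by-coordinate and noting that each of the four parity-consistent sub-parts of $\pi^\star$ carries $\cP$-mass $\cP(\pi^\star)/4$, shows that refining $\pi^\star$ by $\chi^\star$ increases $\Phi$ by exactly $\cP(\pi^\star) \sum_{j=1}^3 \widehat{E_j|_{\pi_j^\star}}(\chi^\star)^2 \;\geq\; \cP(\pi^\star) \cdot \delta^2$.

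If at each iteration I can select a bad part $\pi^\star$ with $\cP(\pi^\star) \geq \delta$, then each refinement bumps $\Phi$ by at least $\delta^3$, yielding a total iteration bound of $3/\delta^3$ and therefore the required codimension bound. The main obstacle is guaranteeing that such a heavy bad part can always be selected while the bad $\cP|E$-mass exceeds $\delta/\alpha$. The leverage is the inequality $\sum_{\pi \text{ bad}} \cP(\pi) \geq \sum_{\pi \text{ bad}} \cP(\pi \cap E) \geq \delta$ (using $\cP|E(\cdot) = \cP(\cdot \cap E)/\alpha$ and the bad-mass hypothesis); combined with the slow growth $|\Pi| \leq 1 + 7t$ of the partition size after $t$ iterations, an averaging argument produces a heavy bad part. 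The remaining details---verifying that termination occurs when the partition becomes good and handling the contribution of ``light'' bad parts to the final bad $\cP|E$-mass by a careful choice of threshold---are routine bookkeeping that I would complete at the end.
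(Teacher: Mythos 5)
Your core machinery coincides with the paper's: the potential $\Phi(\Pi)=\sum_{i=1}^3\E_{\pi\sim\Pi(\cP)}\sbra{\mu_{\pi_i}(E_i)^2}$, splitting a part along a character that violates the bound, the exact increment $\cP(\pi^\star)\sum_{j}\widehat{E_j|_{\pi_j^\star}}(\chi^\star)^2$ for the refined part, and the reduction from $\Pi(\cP|E)$ to $\Pi(\cP)$ via $\cP(\pi\cap E)\le \cP(\pi)$ and $\cP(E)=\alpha$. The gap is exactly at the step you flag as the ``main obstacle'': your averaging argument does not produce a bad part of $\cP$-mass at least $\delta$. From $\sum_{\pi\,\text{bad}}\cP(\pi)\ge\delta$ and $|\Pi_t|\le 1+7t$ you can only extract a bad part with $\cP(\pi^\star)\ge \delta/(1+7t)$; nothing prevents the bad mass from being spread evenly over many light parts, in which case no single bad part has mass $\ge\delta$. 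With that weaker guarantee the per-iteration energy increment is only $\delta^3/(1+7t)$, so the number of iterations compatible with $\Phi\le 3$ can be as large as $\exp\pbra{\Theta(1/\delta^3)}$ rather than $3/\delta^3$; and since your codimension bound is ``total iteration count,'' the stated bound $3/\delta^3$ is not established. A per-lineage rescue does not work either: along a fixed lineage the masses decay geometrically (each parity-consistent child carries a quarter of its parent's $\cP$-mass), and the squared density of a \emph{specific} child can decrease even though the mass-weighted average of squares increases, so neither the global potential nor a local one bounds the depth of a single lineage under one-part-at-a-time refinement.

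The fix is the paper's route: refine in rounds, splitting \emph{every} part of $\Pi_t$ (in particular every bad part) along its own maximizing character in the same round. Each part contributes a nonnegative change $\cP(\pi)\sum_j \widehat{E_j|_{\pi_j}}(\chi)^2\ge 0$, each bad part contributes at least $\delta^2\cdot\cP(\pi)$, and the bad parts carry total $\cP$-mass more than $\delta$, so every round increases $\Phi$ by at least $\delta^3$; since each round raises every part's codimension by exactly one, the codimension is bounded by the number of rounds, i.e.\ by $3/\delta^3$. The remaining ingredients of your outline --- the increment identity $\tfrac12\pbra{\mu_{\pi_i^{(0)}}(E_i)^2+\mu_{\pi_i^{(1)}}(E_i)^2}=\mu_{\pi_i}(E_i)^2+\widehat{E_i|_{\pi_i}}(\chi)^2$, the $8$-way split with the four parity-consistent sub-parts of equal $\cP$-mass, and the conditioning step converting a $1-\delta$ guarantee under $\Pi(\cP)$ into $1-\delta/\alpha$ under $\Pi(\cP|E)$ --- are correct and match the paper.
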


Recall that $E_i|_{ \pi_i}: \pi_i \to \{0,1\}$ denotes the indicator function of $E_i$ restricted to the subspace $\pi_i\subseteq \F_2^n$. The main idea behind the proof of the above proposition is to keep dividing the space based on Fourier coefficients that violate the required condition. A simple potential function argument shows that this cannot be repeated too many times.

For ease of notation, we introduce the following notation. Let $\cV$ be a subspace of $\F_2^n$, $a_1,a_2,a_3\in \F_2^n$ , $a=(a_1,a_2,a_3)$ and  $\pi=\pi_1\times \pi_2\times \pi_3$ where $\pi_i=a_i+\cV$. Let $\chi\in\widehat{\cV}$. We define $\chi_a^3:\pi\to \{-1,1\}^3$ for all $w_i\in \pi_i$ by
	\[ \chi_a^3 \begin{pmatrix} w_1\\ w_2 \\ w_3 \end{pmatrix}\eqdef \begin{pmatrix} \chi_{a_1}(w_1)\\ \chi_{a_2}(w_2) \\ \chi_{a_3}(w_3) \end{pmatrix}=\begin{pmatrix} \chi(a_1+w_1)\\ \chi(a_2+w_2) \\ \chi(a_3+w_3) \end{pmatrix}.  \]
	
\begin{proof}[Proof of \Cref{prop:appendix}]
	We will show that the desired property, namely \Cref{property} holds with probability at least $1-\delta$ over $\pi \sim \Pi(\cP)$.
	Since $\cP(E)=\alpha$, conditioning on the event $E$ implies that this property holds with probability at least $1-\frac{\delta}{\alpha}$ over $\pi\sim \Pi(\cP|E)$.
	
	We construct the desired partition $\Pi$ iteratively, starting with $\Pi_1=\{(\bbF_2^n)^3\}$.
	At each step $t\geq 1$, if with probability more than $\delta$ over $\pi\sim\Pi_t(\cP)$, \Cref{property} does not hold, then we refine the partition $\Pi_t$ as follows. Consider any $\pi \in \Pi_t$ and let $\pi  = a + \cV^3$ for some subspace  $\cV\subseteq \F_2^n$ and $a\in (\F_2^n)^3$. Choose $i\in [3]$ and $\chi \in \widehat{\cV} \setminus \{\emptyset\}$ such that $\abs{\widehat{E_i|_{ \pi_i}}(\chi)}$ is maximized.
	We partition the subspace $\pi$ into subspaces $\pi_z\eqdef\{x \in \pi: \chi^3_a(x) = z \}$ for each $z\in \{-1,1\}^3$. Let $\Pi_{t+1}=\{ \pi_z: \pi\in \Pi_t, z\in\{-1,1\}^3\}$. This refinement step produces an affine partition of codimension one more than before. We argue that this refinement does not occur too often, by a potential function argument.

	Define a potential function $\Phi=\sum_{i=1}^3 \Phi_i$  where $\Phi_i$ is defined for an affine partition $\Pi$ by 
 	\[ \Phi_i(\Pi)\eqdef\E_{\substack{\pi\sim \Pi(\cP)\\\pi = \pi_1\times \pi_2 \times \pi_3}}\sbra{ \abs{\widehat{E_i|_{ \pi_i}}(\emptyset )}^2 }=\E_{\substack{\pi\sim \Pi(\cP)\\\pi = \pi_1\times \pi_2 \times \pi_3}}\sbra{\mu_{\pi_i}(E_i)^2}. \]  
	By definition, for every $\Pi$, it holds that $0 \leq \Phi(\Pi) \leq 3$.
	We show that in each step of the refinement, $\Phi$ increases by at least $\delta^3$, and hence the process must stop after at most $3/\delta^3$ steps.
	This is proved as follows.
	
	Consider any step $t$ of the refinement. 
	Consider any subspace $\pi = a + \cV^3 \in \supp(\Pi_t(\cP))$ that contributes to the potential function at time $t$. The subspace $\pi$ is partitioned into subspaces $\pi_z:z\in\{-1,1\}^3$ at time $t$. Since $\cP$ is uniform over $\{y\in (\F_2^n)^3: y_1+y_2+y_3 = 0\}$, we have $\pi_z\in \supp(\Pi_{t+1}(\cP))$ if and only if $z_1\cdot z_2\cdot z_3=1$. Hence, to analyze the change in potential function from time $t$ to $t+1$, it suffices to focus on subspaces $\pi_z$ such that $z_1\cdot z_2\cdot z_3=1$. Furthermore, the distribution $\Pi_{t+1}(\cP)$ assigns equal probabilities to the subspaces $\pi_z $ for each $z\in\{-1,1\}^3$ such that $ z_1\cdot z_2\cdot z_3=1$.  
	 %we observe that the partitions corresponding to $z\in \{-1,1\}^3$ with $z_1z_2z_3 = 1$ have equal measure under $\cP$, and the ones corresponding to $z_1z_2z_3 = -1$ have zero measure under $\cP$.

	Let $i\in [3]$, and $\chi \in \widehat{\cV}\setminus \{\emptyset\}$ be used to partition $\pi$ in the construction of $\Pi_{t+1}$ from $\Pi_t$. Sample a uniformly random $z\in \{-1,1\}^3$ such that $z_1\cdot z_2\cdot z_3=1$ and consider the corresponding subspace $\pi_z$.  For any $j\in [3]$, the quantity $\widehat{E_j|_{ \pi_j}}(\emptyset)$ is updated to either $\widehat{E_j|_{ \pi_j}}(\emptyset) + \widehat{E_j|_{ \pi_j}}(\chi_{a_j})$, or to $\widehat{E_j|_{ \pi_j}}(\emptyset) - \widehat{E_j|_{ \pi_j}}(\chi_{a_j})$, depending on the value of $z_j\in\{-1,1\}$. %\textcolor{red}{(Not saying which of $z_j=1$ or -1 leads to + or - since fourier coefficients may be dependent on choice of vector other than this same $'a'$. Also choice of $a$ to define partitions not unique)}
	Since under the distribution $\Pi_{t+1}(\cP)$, the subspaces $\pi_z$ for $z\in\{-1,1\}^3: z_1\cdot z_2\cdot z_3=1$ are equally likely (in particular $z_j=1$ and $z_j=-1$ are equally likely), this quantity is updated to  $\widehat{E_j|_{ \pi_j}}(\emptyset) + \widehat{E_j|_{ \pi_j}}(\chi_{a_j})$ with probability $\frac{1}{2}$ and $\widehat{E_j|_{ \pi_j}}(\emptyset) - \widehat{E_j|_{ \pi_j}}(\chi_{a_j})$ with probability $\frac{1}{2}$. This implies that the contribution of $\pi_j$ to the change in potential from time $t$ to $t+1$ is exactly
	\[ \frac{1}{2}\left(   \left(\widehat{E_j|_{ \pi_j}}(\emptyset) + \widehat{E_j|_{ \pi_j}}(\chi_{a_j}) \right)^2 +   \left(\widehat{E_j|_{ \pi_j}}(\emptyset) - \widehat{E_j|_{ \pi_j}}(\chi_{a_j}) \right)^2  \right) - \left(\widehat{E_j|_{ \pi_j}}(\emptyset)\right)^2 = \left(\widehat{E_j|_{ \pi_j}}(\chi)\right)^2.\]
	Adding the above equation for all $j\in [3]$ implies that the change in potential due to $\pi$ is  $\sum_{j\in[3]}  \pbra{\widehat{E_j|_{ \pi_j}}(\chi)}^2$. Note that the refinement is performed only when with probability at least $\delta$ over the choice of $\pi \sim \Pi_t(\cP)$, it holds that for some $i\in[3]$, $\abs{\widehat{E_i|_{ \pi_i}}(\chi)}\geq \delta$. This implies that the overall change $\Phi(\Pi_{t+1}) - \Phi(\Pi_t)$ is at least $\delta \cdot \delta ^2 = \delta^3$. This completes the proof.
\end{proof}

\bibliographystyle{alpha}
\bibliography{main}

\end{document}